\DeclarePairedDelimiter\ceil{\lceil}{\rceil}
\newcommand{\RN}[1]{\textup{\uppercase\expandafter{\romannumeral#1}}}
\def\NAT@def@citea{\def\@citea{\NAT@separator}}% Suppress spaces between citations using natbib.sty
\theoremstyle{plain}% Theorem-like structures provided by amsthm.sty
\newtheorem{theorem}{Theorem}[section]
\newtheorem{lemma}[theorem]{Lemma}
\newtheorem{corollary}[theorem]{Corollary}
\newtheorem{note}{Note}
\theoremstyle{definition}
\newtheorem{definition}[theorem]{Definition}
\theoremstyle{remark}
\begin{document}

%\articletype{ARTICLE TEMPLATE}% Specify the article type or omit as appropriate

\title{Distributed Maximal Independent Set on Scale-Free Networks}

\author{
\name{Hasan Heydari\textsuperscript{a}\thanks{CONTACT Hasan Heydari Email: h\underline{ }heydari@ut.ac.ir}, S. Mahmoud Taheri\textsuperscript{a} and Kaveh Kavousi\textsuperscript{b}}
\affil{\textsuperscript{a}Faculty of Engineering Science, College of Engineering, University of Tehran, Tehran, Iran; \textsuperscript{b}Laboratory of Complex Biological Systems and Bioinformatics (CBB), Institute of Biochemistry and Biophysics (IBB), University of Tehran, Tehran, Iran}
}

\maketitle

\begin{abstract}
The problem of distributed maximal independent set (MIS) is investigated on inhomogeneous random graphs with power-law weights by which the scale-free networks can be produced. Such a particular problem has been solved on graphs with $n$ vertices by state-of-the-art algorithms with the time complexity of $O(\log{n})$. We prove that for a scale-free network with power-law exponent $\beta > 3$, the induced subgraph is constructed by vertices with degrees larger than $\log{n}\log^{*}{n}$ is a scale-free network with $\beta' = 2$, almost surely (a.s.). Then, we propose a new algorithm that computes an MIS on scale-free networks with the time complexity of $O(\frac{\log{n}}{\log{\log{n}}})$ a.s., which is better than $O(\log{n})$. Furthermore, we prove that on scale-free networks with $\beta \geq 3$, the arboricity and degeneracy are less than $2^{log^{1/3}n}$ with high probability (w.h.p.). Finally, we prove that the time complexity of finding an MIS on scale-free networks with $\beta\geq 3$ is⁡ $O(\log^{2/3}n)$ w.h.p.
\end{abstract}

\begin{keywords}
Inhomogeneous random graph; Power-law distribution; Distributed algorithm; Arboricity; Degeneracy
\end{keywords}

\section{Introduction}\label{sec:introduction}
\subsection{Sale-Free networks}
Many real-world networks like the Internet, power grids, peer-to-peer networks, social and biological networks, the World Wide Web (WWW), research citation networks, etc. can not be described by classical Erd\H{o}s and R\'enyi (ER) \cite{Erdos1959} graphs \cite{Barabasi2016,Newman2003}. Indeed, they have a power-law degree distribution in the form of $P(k)\sim ck^{-\beta}$, where $P(k)$, distribution function of degree, denotes the fraction of vertices with degree $k$, $\beta$ is a constant exponent that describes a particular network and $c$ is a suitable constant \cite{Barabasi2002}. Such networks are called scale-free, and many models that explain their emergence have been proposed by some researchers. Two most important models in the class of scale-free networks are the Preferential Attachment \cite{Barabasi1999} and the Inhomogeneous Random Graphs \cite{Hofstad2016}. Inhomogeneous random graphs have multiple models, such as Chung and Lu \cite{Aiello2000,ChungLu2002}, Norros and Reittu \cite{Norros2006}, generalized random graphs \cite{Britton2006}, and Hofstad \cite{Hofstad2016}. It should be noted the Hofstad model is a generalization of the other models \cite{Friedrich2015}, so we use such model in this paper.

Important characteristics of the scale-free networks like the size of the giant component \cite{Bollob2007}, clustering coefficient \cite{Eggemann2011}, diameter \cite{Bollob2007,Reuven2003}, and the clique number \cite{Dole2017} have been considered. These networks have been examined in various contexts like finding parameterized cliques \cite{Friedrich2015}, PageRank \cite{Ningyuan2014}, information dissemination \cite{Fountoulakis2012}, and counting triangles, finding maximum cliques, transitive closure and finding perfect matching \cite{Brach2016}. In addition, scale-free networks have been applied to IoT networks \cite{Sohn2017} and WSNs \cite{Zheng2013,CYang2015,LLi2014} to enhance their synchronization, error tolerance, and robustness.

In other hand, Clauset et al. introduced a goodness-of-fit test based on Kolmogorov-Smirnov (KS) statistic to determine if the power-law distribution is a statistical plausible model for some continuous or discrete-valued data \cite{Clauset2009}. We will employ their method to evaluate if a set of numbers have the power-law distribution in this paper.
%=============================================
\subsection{Distributed MIS}
The problem of finding an MIS is one of the quite fundamental problems in the field of parallel and distributed computing because it solves the essential challenge of symmetry breaking, and furthermore, it is a building block for many distributed algorithms \cite{Barenboim2016,Censor2014,Liu2016,Yu2017}. More than 30 years ago, Alon et al. \cite{Alon1986} and Luby \cite{Luby1986} presented a simple randomized parallel algorithm to compute an MIS of a general graph. This algorithm computes an MIS for an $n$ node graph in $O(\log{n})$⁡ communication rounds with high probability (w.h.p.). Recently, Ghaffari \cite{Ghaffari2016} gave an MIS algorithm with the time complexity of $O(\log{\Delta} + 2^{O(\sqrt{\log{\log{n}}})})$ where $\Delta$ denotes the maximum degree of the graph. In addition, Barenboim and Elkin proposed an MIS algorithm running in $O(\log^{2}{\Delta} + 2^{O(\sqrt{\log{\log{n}}})}\big)$ on general graphs \cite{Barenboim2016}.

The MIS problem has been considered on some special family of graphs, as well. Working on random graph $G(n,p)$, for instance, shows that the MIS problem can be solved with the time complexity of $O\big(\ln{(np)} \ln{(\ln{p^{-1}})}\big)$ \cite{Krzywdzinski2015}. Studying the MIS problem leads to propose an algorithm on trees with the running time $O\big(\sqrt{\log{n} \log{\log{n}}}\big)$ \cite{Lenzen2011} and another algorithm on anonymous rings  with the time complexity of  $O(\sqrt{\log{n}})$ \cite{Fontaine2013}. Moreover, the MIS problem can be solved on bounded-independence \cite{Schneider2010} and constant degree \cite{Goldberg1988} graphs with the time complexity $O(\log^{*}{n})$. Furthermore, Panconesi and Rizzi proposed a deterministic algorithm to find an MIS with $O(\Delta^{2} + \log^{*}{n})$ rounds which is better than the Alon et al. and Luby's algorithms on sparse graphs \cite{Panconesi2001}.

The MIS problem is also solved on bounded arboricity graphs. For graphs with arboricity $a =\Omega (\sqrt{\log{n}})$, an MIS can be computed deterministically with the time complexity of $O\big(a\sqrt{\log{n}} + a\log{a}\big)$ \cite{Barenboim2010}. Recently, Barenboim and Elkin proposed an $O(\log^{2/3}{n})$-time MIS algorithm for graphs with arboricity up to $2^{\log^{1/3}{n}}$ \cite{Barenboim2016}.

In the MIS algorithms, in addition to the time complexity, the message complexity also have been investigated. M\'etivier et al. presented an MIS algorithm with the time complexity of $O(\log{n})$ in which the bit complexity per channel is $O(\log{n})$; in contrast, in the Alon et al. and Luby's algorithms, the bit complexity per channel are $(\log^{2}{n})$ \cite{Metivier2011}. Recently, an algorithm is presented by Jeavons et al. with the time complexity of $O(\log{n})$ rounds and the optimal expected message complexity of $O(1)$ single-bit messages that broadcast by each node \cite{Jeavons2016FeedbackFromNature}.

%============================================= 
\subsection{Contributions}
In this paper, we study the problem of distributed maximal independent set on scale-free networks. Scale-free networks are modeled by inhomogeneous random graphs with power-law weights. In this regard, the main contributions of the paper are as follows:
\begin{itemize}
	\item
	We prove the induced subgraph which constructed by vertices with degrees larger than $\log{n}\log^{*}{n}$ from a scale-free network with power-law exponent $\beta > 3$ is a scale-free network with $\beta' = 2$, almost surely (a.s.).
	\item
	We propose a new algorithm which compute an MIS on scale-free networks in $O(\frac{\log{n}}{\log{\log{n}}})$-time a.s., so it is better than $O(\log{n})$-time state-of-the-art algorithms.
	\item
	We prove that on scale-free networks with $\beta \geq 3$, the arboricity and the degeneracy are less than $2^{\log^{1/3}{n}}$ w.h.p.
	\item
	We show the time complexity of finding an MIS for $\beta \geq 3$ is $⁡O(\log^{2/3}{n})$ w.h.p. and for arbitrary $\beta$ is $⁡O(\log^{2/3}{n})$ a.s.
\end{itemize}
%=============================================
\subsection{Structure of the paper}
In Section 2, we review some preliminaries and results. In Section 3, we prove that an MIS on scale-free networks can be computed for $\beta \geq 3$ with the time complexity of $⁡O(\log^{2/3}{n})$ w.h.p. and for arbitrary $\beta$ in $⁡O(\log^{2/3}{n})$ rounds a.s. In Section 4, we prove that an MIS on scale-free networks can be computed in $O\big(\frac{\log{n}}{\log{\log{n}}}\big)$ rounds a.s. A simulation experiment is provided in Section 5. Finally, conclusion remarks and a few words about the future works are provided in Section 6.

\section{Preliminaries}
In this paper, the synchronous message passing model of distributed computation is employed. The network is modeled by a simple undirected unweighted graph $G=(V,E)$, where nodes represent computational devices and edges represent bidirectional communication links. We assume each node knows the number of nodes, the maximum degree, and the power-law exponent of the scale-free network. In each synchronous round, nodes may perform arbitrary finite local computation and send (receive) a message with arbitrary length to (from) each neighbor. In other words, the message passing model is \textit{Local} model \cite{Peleg2000}. Throughout the paper, $n$ stands for the number of nodes in the network. The set of neighbors of a node $v\in V$ at distance one is denoted by $N(v)$, such that $N(v) = \{u| u\in V , (v,u)\in E\}$. The degree of $v \in V$ is denoted by $deg(v)=|N_{v}|$. We define $V(H)$, $E(H)$, and $deg_{H}(v)$ to be the set of vertices, the set of edges and degree of $v$, respectively, with respect to a graph $H$. Typically, $H$ is an induced subgraph of $G$. The induced subgraph $G_{[S]}$ is the graph whose vertex set is $S\subseteq V(G)$. 

There exist several measures of efficiency in distributed algorithms; here we focus on the running time, i.e. the number of rounds of distributed communication \cite{Sarma2015}. We call an event occurs a.s. and w.h.p. if its probability tends to $1$ as $n \rightarrow \infty$ and it has probability at least $1 - n^{-\Omega (1)}$, respectively. The function $\log^{*}()$ is defined recursively as: $\log^{*}{0} = \log^{*}{1} = \log^{*}{2} = 0$ and $\log^{*}{n} = 1 + \log^{*}\ceil{\log{n}}$  for $n > 2$ \cite{Schneider2010}. In the following, some definitions of graph theory will be reviewed. 
%~~~~~~~~~~~~~~~~~~~~~~~~~~~~~~~~~~~~
\begin{definition}
	(Maximal Independent Set) Given an undirected graph $G=(V,E)$, an independent set in $G$ is a subset of vertices $U \subseteq V$, such that no two vertices in $U$ are adjacent.  An independent set $U$ is called MIS if no further vertex can be added to $U$ without violating independence condition. 
\end{definition}
%~~~~~~~~~~~~~~~~~~~~~~~~~~~~~~~~~~~~
\begin{definition} 
	(Arboricity) Given an undirected graph $G=(V,E)$, the arboricity $a(G)$  is the smallest integer $k$ for which there exists forests $T_{1},T_{2},...,T_{k}$ which are subgraphs of $G$, such that their union is $G$. An equivalent definition formulated by Nash-Williams \cite{Nash1964} states that 
	\begin{ceqn}
		\begin{align*}
		a(G) = max \Big\{ \frac{|E(H)|}{|V(H)|-1}  \Big| H \subseteq G , |V(H)| \geq 2 \Big\}
		\end{align*}
	\end{ceqn}
\end{definition}
%~~~~~~~~~~~~~~~~~~~~~~~~~~~~~~~~~~~~
\begin{definition}
	(Degeneracy) Given an undirected graph $G=(V,E)$, the degeneracy $d(G)$ is the smallest integer $k$ such that every nonempty induced subgraph of $G$ contains a vertex with degree at most $k$. In other words
	\begin{align*}
	d(G) = max \big\{min\{deg_{H}(v) | v\in V(H)\} \big| H\subseteq G , H\neq \emptyset \big\}
	\end{align*}
\end{definition}
%~~~~~~~~~~~~~~~~~~~~~~~~~~~~~~~~~~~~
The degeneracy can be found by iteratively removing a vertex of minimum degree. Algorithm \ref{algDegeneracy} is a modified and simplified version of the algorithm that is proposed in \cite{Matula1983}, and will be needed in the next section. 
%~~~~~~~~~~~~~~~~~~~~~~~~~~~~~~~~~~~~
\begin{algorithm}
	\renewcommand{\baselinestretch}{1}
	\caption{Degeneracy Algorithm}\label{algDegeneracy}
	\begin{algorithmic}
		\State{\textbf{Input:} Graph $G=(V,E)$.}
		\State{\textbf{Output:} The output of this algorithm is $d(G)$.}
		\State{$d(G) = 0, n = |V|, H=G$}
		\For{$j =1 $ to $n$}
		\State{let $v$ be a vertex with minimum degree in $H$}
		\If{$d(G) < deg_{H}(v)$}
		\State{$d(G) = deg_{H}(v)$}
		\EndIf
		\State{$H = H_{[V(H) - \{v\}]}$} \Comment{Delete $v$ from $H$}
		\EndFor
	\end{algorithmic}
\end{algorithm}
%~~~~~~~~~~~~~~~~~~~~~~~~~~~~~~~~~~~~

Since the relationship between arboricity and degeneracy plays a vital role in our work, the following theorem is expressed. 
%~~~~~~~~~~~~~~~~~~~~~~~~~~~~~~~~~~~~
\begin{theorem}\label{arboricityDegeneracy}
	For any arbitrary graph $G$ with arboricity $a(G)$ and degeneracy $d(G)$ 
	\begin{ceqn}
		\begin{align*}
		a(G) \leq d(G)
		\end{align*}
	\end{ceqn}
\end{theorem}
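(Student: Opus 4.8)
The plan is to give a constructive decomposition of $E(G)$ into $d(G)$ forests, which by the definition of arboricity immediately yields $a(G) \le d(G)$; the orientation used is the one induced by the deletion order of Algorithm \ref{algDegeneracy}. Write $k = d(G)$ and let $v_1, v_2, \dots, v_n$ be the vertices of $G$ listed in the order in which Algorithm \ref{algDegeneracy} removes them, so that for each $j$ the vertex $v_j$ has at most $k$ neighbours in the graph $G_{[\{v_j, v_{j+1}, \dots, v_n\}]}$; this is exactly the content of the line of the algorithm that maintains $d(G)$. Orient every edge of $G$ from its lower-indexed endpoint to its higher-indexed endpoint. Then every vertex has out-degree at most $k$, so we may label the out-edges of each vertex with pairwise distinct labels from $\{1, 2, \dots, k\}$ and put $F_i = \{e \in E(G) : e \text{ has label } i\}$ for $1 \le i \le k$.

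It remains to show that each $F_i$ is a forest, i.e. contains no cycle. Suppose, for contradiction, that $C \subseteq F_i$ is a cycle, and let $v_j$ be the vertex on $C$ of largest index. Both edges of $C$ incident to $v_j$ join $v_j$ to vertices of smaller index, hence both are oriented into $v_j$, so $v_j$ has no out-edge on $C$. On the other hand, every edge of $C$ is an out-edge of exactly one of its two endpoints, so summing over the vertices of $C$ gives $\sum_{v \in V(C)} (\text{out-degree of } v \text{ within } C) = |V(C)|$; since each vertex contributes at most one out-edge of label $i$ (the labels on a vertex's out-edges are distinct) and all edges of $C$ have label $i$, every vertex of $C$ must have exactly one out-edge on $C$, contradicting the fact that $v_j$ has none. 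Hence $F_1, \dots, F_k$ are forests whose union is $G$, and therefore $a(G) \le k = d(G)$.

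I expect the only genuinely delicate point to be the acyclicity argument for the colour classes $F_i$ — in particular the observation that the highest-indexed vertex of any cycle has no forward (out-)edge along that cycle, which is what forces the contradiction. Everything else (the existence of a deletion order with the stated degree bound, and the labelling of the out-edges) is immediate from the definition of degeneracy and from Algorithm \ref{algDegeneracy}. An alternative route would avoid the explicit forest decomposition and instead combine the degeneracy order with the Nash--Williams formula: a short induction on $|V(H)|$ shows $|E(H)| \le k(|V(H)| - 1)$ for every subgraph $H \subseteq G$ with $|V(H)| \ge 2$, whence $\max_H |E(H)|/(|V(H)|-1) \le k$, and since $d(G)$ is an integer this again gives $a(G) \le d(G)$. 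I would present the constructive proof as the main argument and mention this second derivation only if a self-contained reduction to Nash--Williams is preferred.
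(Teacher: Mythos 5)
Your proof is correct. The paper itself gives no argument here (it dismisses the claim as ``straightforward''), so there is nothing to diverge from; what you have written is the standard and complete justification. The degeneracy ordering produced by Algorithm \ref{algDegeneracy} does guarantee that each $v_j$ has at most $k=d(G)$ neighbours among $\{v_j,\dots,v_n\}$, the low-to-high orientation therefore has out-degree at most $k$, and your acyclicity argument for each colour class $F_i$ --- the highest-indexed vertex of a putative cycle has both incident cycle edges oriented inward, while a counting argument forces every vertex of the cycle to have exactly one out-edge of label $i$ on it --- is airtight. The alternative route via Nash--Williams is also sound, with the one caveat that the formula as stated in the paper omits the ceiling; your remark that $d(G)$ is an integer correctly papers over that, since $a(G)=\bigl\lceil \max_H |E(H)|/(|V(H)|-1)\bigr\rceil \le \lceil k\rceil = k$. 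Either version would serve as the missing proof.
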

%~~~~~~~~~~~~~~~~~~~~~~~~~~~~~~~~~~~~
\begin{proof}
	The proof is straightforward.
\end{proof}
%~~~~~~~~~~~~~~~~~~~~~~~~~~~~~~~~~~~~
In the following, we state some basic definitions and theorems about inhomogeneous random graphs, scale-free networks, and the MIS problem which are used in the subsequent sections from \cite{Barenboim2012,Barenboim2016,Friedrich2015,Hofstad2016,Krzywdzinski2015,Reuven2003}. 

Let us denote by $W=(W_{1},W_{2},...,W_{n})$ a random sample of size $n$ from a complementary cumulative distribution function (CCDF) $F(w)$ with observed values $w=(w_{1}^{o}, w_{2}^{o},...,w_{n}^{o})$  that has the empirical complementary cumulative distribution function (ECCDF) $F_{n}(w)=\Pr{[W\geq w]}$. We use the sequence $\sigma_{w} = w_{1},w_{2},...,w_{n}$ in which $w_{i} \leq w_{j}$ $(1\leq i < j\leq n)$ to denote the sorted order of $w$ \big($min(w) = w_{1}, max(w) = w_{n}$\big).
%~~~~~~~~~~~~~~~~~~~~~~~~~~~~~~~~~~~~
\begin{definition}\label{dfnPowerLawWeight}
	(Power-law weights) We say that $F_{n}(w)$ follows the power-law with exponent $\beta$, if there exist two constants $\alpha_{1}, \alpha_{2}$ such that
	\begin{ceqn}
		\begin{align*}
		\alpha_{1}w^{-\beta+1}\leq F_{n}(w) \leq \alpha_{2}w^{-\beta+1}
		\end{align*}
	\end{ceqn}
\end{definition}
%~~~~~~~~~~~~~~~~~~~~~~~~~~~~~~~~~~~~
\begin{definition}\label{dfnHIRG}
	(Hofstad inhomogeneous random graph) For $n \in N$ and an increasing sequence $\sigma_{w}$, the Hofstad inhomogeneous random graph $G(n,\sigma_{w})$ is a graph on vertex set $V$ $(|V|=n)$ such that each vertex $i$ has weight $w_{i}$, and the graph contains each edge $(i,j)$ with probability $p_{i,j}$ such that $p_{i,j}=\Omega (\frac{w_{i}w_{j}}{n})$ and $p_{i,j}=O(\frac{w_{i}w_{j}}{w_{i}w_{j}+n})$.
\end{definition}
%~~~~~~~~~~~~~~~~~~~~~~~~~~~~~~~~~~~~
\begin{note}
	If in Definition \ref{dfnHIRG}, the weights have a power-law distribution, then the degrees of vertices follows the power-law distribution, and we have a scale-free network.
\end{note}
%~~~~~~~~~~~~~~~~~~~~~~~~~~~~~~~~~~~~
Let us denote by $\textsl{g}_{sf}(\beta)$ the probability space of inhomogeneous random graphs with power-law weights in which the power-law exponent $\beta$ was created as described above. We will use $G_{sf}(\beta)$ to denote a graph drawn from $\textsl{g}_{sf}(\beta)$, and for $\beta \geq a$, we will use $G_{sf}(\beta \geq a)$. If $H$ is an induced subgraph of $G_{sf}(\beta)$ and a scale-free network, then we show its power-law exponent by $\beta_{H}$.
%~~~~~~~~~~~~~~~~~~~~~~~~~~~~~~~~~~~~
\begin{theorem}\label{thmWeightDegreeRela}
	Let $G=(V,E)$ be an inhomogeneous random graph. For $i \in V$, we have
	\begin{ceqn}
		\begin{equation*}
		E[deg(i)] = \Theta (w_{i})
		\end{equation*}
	\end{ceqn}
\end{theorem}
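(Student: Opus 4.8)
The plan is to compute $E[deg(i)]$ directly by linearity of expectation and then sandwich the resulting sum using the two‑sided control on the edge probabilities from Definition \ref{dfnHIRG}. Writing $deg(i) = \sum_{j \in V \setminus \{i\}} \mathbbm{1}[(i,j) \in E]$ we obtain
\begin{ceqn}
\begin{align*}
E[deg(i)] = \sum_{j \in V \setminus \{i\}} p_{i,j},
\end{align*}
\end{ceqn}
so the whole statement reduces to estimating $\sum_{j \neq i} p_{i,j}$ from above and below.

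For the upper bound, the hypothesis $p_{i,j} = O\big(\tfrac{w_i w_j}{w_i w_j + n}\big)$ together with $w_i w_j + n \ge n$ gives $p_{i,j} \le c_1 \tfrac{w_i w_j}{n}$ for a constant $c_1$, hence $E[deg(i)] \le \tfrac{c_1 w_i}{n}\sum_{j \in V} w_j$. For the lower bound, the hypothesis $p_{i,j} = \Omega\big(\tfrac{w_i w_j}{n}\big)$ gives $p_{i,j} \ge c_2 \tfrac{w_i w_j}{n}$ for a constant $c_2 > 0$, so $E[deg(i)] \ge \tfrac{c_2 w_i}{n}\big(\sum_{j \in V} w_j - w_i\big)$. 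Thus it only remains to show $\sum_{j\in V} w_j = \Theta(n)$ and $w_i = o(n)$; this is exactly where the power‑law structure of $\sigma_w$ is used. Since the $w_j$ form a sample from a power‑law CCDF with exponent $\beta$ in the ranges considered in the paper, the distribution has finite mean, the empirical mean $\tfrac1n\sum_j w_j$ concentrates on the (constant) population mean, giving $\sum_j w_j = \Theta(n)$; moreover the largest weight is of order $n^{1/(\beta-1)} = o(n)$, so $w_i \le w_n = o(n)$ and $\sum_j w_j - w_i = \Theta(n)$. Substituting into the two displayed bounds yields $\Omega(w_i) \le E[deg(i)] \le O(w_i)$, i.e. $E[deg(i)] = \Theta(w_i)$.

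The main obstacle is the middle step, pinning down $\sum_{j} w_j = \Theta(n)$, since it is the only place where information about the weight sequence beyond Definition \ref{dfnHIRG} is needed; one must be mildly careful about the assumed range of $\beta$ (so that $E[W]<\infty$) or, if one wants a fully general statement, truncate the weights at some threshold and argue that the discarded tail is negligible. A secondary subtlety is that the lower bound $p_{i,j} = \Omega(\tfrac{w_i w_j}{n})$ is only meaningful when $w_i w_j = O(n)$, as otherwise it would force $p_{i,j} > 1$; this holds automatically once $w_n = o(\sqrt n)$, which is again guaranteed in the relevant $\beta$ range, so no additional case analysis is required.
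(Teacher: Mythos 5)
The paper does not actually prove this statement; it is listed among the preliminaries as a result imported from the cited literature on inhomogeneous random graphs, so there is no in-paper argument to compare against. Your proof is the standard one and is essentially sound: linearity of expectation reduces everything to $\sum_{j\neq i}p_{i,j}$, the two-sided hypothesis of Definition~\ref{dfnHIRG} sandwiches this between constant multiples of $\frac{w_i}{n}\sum_j w_j$ (using $w_iw_j+n\ge n$ for the upper bound), and the power-law structure gives $\sum_j w_j=\Theta(n)$. You correctly identify the two places where care is needed. First, $\sum_j w_j=\Theta(n)$ genuinely requires $\beta>2$; note that you can get it deterministically from the paper's own weight formula $w_i=\Theta(1)\big(\tfrac{n}{n-i}\big)^{1/(\beta-1)}$ (Relation~\eqref{eq:1}), since $\sum_i (n/(n-i))^{1/(\beta-1)}=\Theta(n)$ when $1/(\beta-1)<1$, which avoids any appeal to concentration of a random empirical mean. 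Second, your observation that the lower bound $p_{i,j}=\Omega(w_iw_j/n)$ is only consistent with $p_{i,j}\le 1$ when $w_iw_j=O(n)$ is a real issue for $2<\beta<3$, where $w_n=\Theta(n^{1/(\beta-1)})\gg\sqrt{n}$; there the $\Theta(w_i)$ conclusion can only hold for vertices whose weight is not too large, and the definition as stated in the paper glosses over this. Since the paper's main quantitative use of this theorem is in the regime $\beta\ge 3$, your argument covers what is actually needed.
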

%~~~~~~~~~~~~~~~~~~~~~~~~~~~~~~~~~~~~
\begin{theorem}\label{lemEXi}
	Let $G_{sf}(\beta \geq 2)$ be an inhomogeneous random graph with power-law weights. The induced subgraph $H$ is constructed by vertices with weights larger than $w_{i}$ from $G_{sf}(\beta \geq 2)$. Then we have
	\begin{ceqn}
		\begin{align}
		& w_{i} = \Theta(1)(\frac{n}{n-i})^{1/(\beta -1)} \label{eq:1}\\
		& E[deg_{H}(i)] = O(w_{i}^{- \beta +3}) \label{eq:3}
		\end{align}
	\end{ceqn}
	According to the above relation and for $\beta \geq 3$, we have
	\begin{ceqn}
		\begin{align}\label{eq3LemEXi}
		E[deg_{H}(i)] = O(1)
		\end{align}
	\end{ceqn}
\end{theorem}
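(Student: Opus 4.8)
The plan is to establish the two displayed estimates \eqref{eq:1} and \eqref{eq:3} in turn and then read off \eqref{eq3LemEXi} as an immediate corollary.

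To establish \eqref{eq:1}, I would use that $w_1\le w_2\le\cdots\le w_n$ are the sorted observed values of the sample $W$, so the ECCDF at $w_i$ is exactly the fraction of sample points that are at least $w_i$, namely $F_n(w_i)=(n-i+1)/n=\Theta\big((n-i)/n\big)$ for $i$ bounded away from $n$. Substituting $w=w_i$ into the two-sided power-law estimate of Definition \ref{dfnPowerLawWeight} gives $w_i^{-(\beta-1)}=\Theta\big((n-i)/n\big)$, and since $\beta>1$ we may invert to get $w_i=\Theta(1)\big(n/(n-i)\big)^{1/(\beta-1)}$.

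Turning to \eqref{eq:3}: in sorted order the vertices of $H$ are $\{i,i+1,\dots,n\}$, so $E[deg_H(i)]=\sum_{j=i+1}^{n}p_{i,j}$. The upper bound in Definition \ref{dfnHIRG} gives $p_{i,j}=O\big(w_iw_j/(w_iw_j+n)\big)=O\big(w_iw_j/n\big)$, hence $E[deg_H(i)]=O\big(\tfrac{w_i}{n}\sum_{j=i+1}^{n}w_j\big)$. The crux is the partial sum $\sum_{j=i+1}^{n}w_j$: plugging in \eqref{eq:1} and reindexing by $m=n-j$ turns it into $\Theta\big(n^{1/(\beta-1)}\sum_{m=1}^{n-i}m^{-1/(\beta-1)}\big)$, and since $\beta>2$ forces the exponent $1/(\beta-1)<1$, the $p$-series partial sum is $\Theta\big((n-i)^{\,1-1/(\beta-1)}\big)=\Theta\big((n-i)^{(\beta-2)/(\beta-1)}\big)$. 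Combining $\sum_{j=i+1}^{n}w_j=\Theta\big(n^{1/(\beta-1)}(n-i)^{(\beta-2)/(\beta-1)}\big)$ with $w_i=\Theta\big(n^{1/(\beta-1)}(n-i)^{-1/(\beta-1)}\big)$ and collecting the powers of $n$ and of $n-i$ yields $E[deg_H(i)]=O\big((n/(n-i))^{(3-\beta)/(\beta-1)}\big)$, which by \eqref{eq:1} equals $O\big(w_i^{-\beta+3}\big)$. Finally, for $\beta\ge 3$ the exponent $-\beta+3$ is nonpositive and $w_i\ge\Theta(1)$ since $n/(n-i)\ge 1$, so $w_i^{-\beta+3}=O(1)$, giving \eqref{eq3LemEXi}.

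I expect the main obstacle to be the $p$-series estimate $\sum_{m=1}^{M}m^{-\gamma}=\Theta(M^{1-\gamma})$ together with the care needed near $j=n$: the raw formula for $w_j$ diverges there, so one must keep the $+1$ (work with $n-j+1$) or truncate the last few terms. It is also worth flagging that the clean bound really requires $\beta>2$ — at $\beta=2$ the $p$-series becomes the harmonic sum and contributes an extra $\log(n-i)$ factor — which is why the $O(1)$ conclusion, the only consequence used in later sections, is stated for $\beta\ge 3$.
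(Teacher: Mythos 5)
Your proof is correct. Note first that the paper does not actually prove Theorem \ref{lemEXi}: it appears in the preliminaries as a result imported from the cited literature (essentially the Friedrich--Krohmer analysis of inhomogeneous random graphs with power-law weights), so there is no in-paper argument to compare against line by line. Your route --- reading $w_i$ off the empirical CCDF via Definition \ref{dfnPowerLawWeight}, bounding $p_{i,j}=O(w_iw_j/(w_iw_j+n))=O(w_iw_j/n)$, and evaluating $\sum_{j>i}w_j$ as a $p$-series partial sum $\Theta\big(n^{1/(\beta-1)}(n-i)^{(\beta-2)/(\beta-1)}\big)$ --- is the standard derivation, and it is in substance the same computation the paper itself performs later for the sharper two-sided statement of Lemma \ref{lemZ2O} and Lemma \ref{OmegaWei}, where the sum $\sum_{i\ge m}w_iw_l/n$ is instead evaluated by integrating $F_n(w)$; the discrete $p$-series sum and the integral of the CCDF are interchangeable here, and your version is if anything more elementary. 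The two caveats you flag are genuine but harmless: the closed form for $w_j$ must be read with $n-j+1$ (or the last term truncated) to avoid the singularity at $j=n$, and at the boundary $\beta=2$ the harmonic sum contributes an extra logarithmic factor, so \eqref{eq:3} as literally stated for $\beta\ge 2$ is off by a $\log$ there --- but the only consequence used downstream, \eqref{eq3LemEXi}, requires $\beta\ge 3$, where your argument (nonpositive exponent $-\beta+3$ together with $w_i=\Omega(1)$) is airtight.
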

%~~~~~~~~~~~~~~~~~~~~~~~~~~~~~~~~~~~~
The relationships between graph $G$ and induced subgraph $H$ are shown in Table \ref{Tab:1}. Note that in graph $H$, for simplicity, the labels of vertices is not changed and is same as with the labels of $G$.
\begin{table*}[!htb]
	\centering
	\includegraphics[scale=0.42]{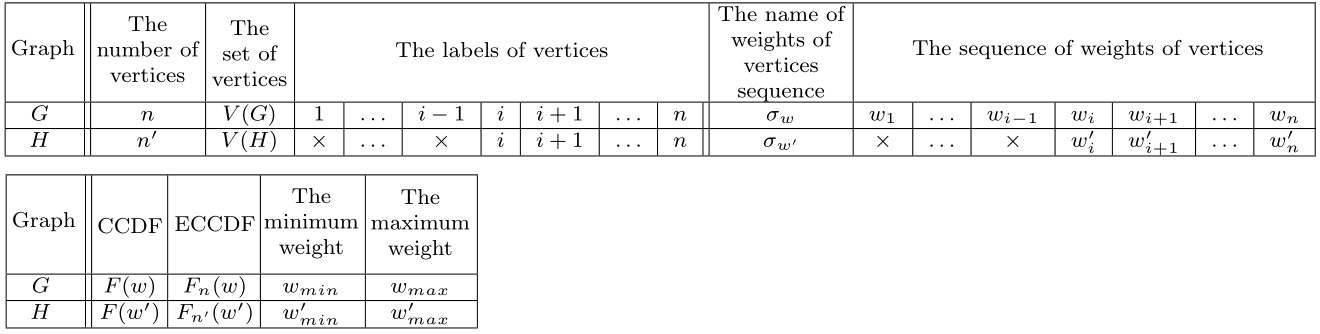}	
	\caption{The relationships between graph $G$ and induced subgraph $H$.}\label{Tab:1}
\end{table*}
%~~~~~~~~~~~~~~~~~~~~~~~~~~~~~~~~~~~~
\begin{theorem}\label{scalefreeDiam}
	The diameter of scale-free networks for $2< \beta <3$, $\beta = 3$, and $\beta >3$ is $O\big(\log{\log{n}}\big)$, $O(\frac{\log{n}}{\log{\log{n}}})$ and $O(\log{n})$ a.s., respectively.
\end{theorem}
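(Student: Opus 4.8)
The plan is to reduce the diameter estimate to a quantitative control of how fast a breadth-first search (BFS) exploration of $G_{sf}(\beta)$ expands, and to run such an exploration simultaneously from both endpoints of a would-be diametral pair. Around a vertex, $G_{sf}(\beta)$ looks --- up to a coupling --- like a multi-type branching process whose types are (discretized) weights and whose mean offspring kernel is of order $w w'/n$, weighted by the empirical weight distribution; by Theorem~\ref{thmWeightDegreeRela} the total weight of a BFS layer is essentially what drives the size of the next layer, and the per-layer multiplication factor is governed by the (possibly truncated) moment $\frac{1}{n}\sum_{i} w_i^2$ of the weight sequence. Using the rank-to-weight law $w_i=\Theta(1)(\frac{n}{n-i})^{1/(\beta-1)}$ from Theorem~\ref{lemEXi}, this moment is $\Theta(1)$ when $\beta>3$, $\Theta(\log n)$ when $\beta=3$, and a genuine power of $n$ when $2<\beta<3$ --- precisely the trichotomy in the statement. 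One point to settle first: since the minimum weight is $\Theta(1)$, $G_{sf}(\beta)$ has isolated vertices a.s., so ``diameter'' here must be read as the diameter of the giant component (equivalently, of the component on which the MIS algorithm is invoked); I would either restrict attention to that component throughout, or prove the bound for the typical distance of two uniformly random vertices and add a union bound showing no pair in the giant is farther.

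For $\beta>3$ the argument is a two-phase meeting in the middle. In the growth phase, I start a BFS from a giant-component vertex $v$ and show, by an inductive coupling with the branching process, that as long as the ball $B_k(v)$ (vertices within distance $k$ of $v$) is not yet large, the total weight of the new layer is w.h.p.\ at least $\nu$ times that of the previous one, where $\nu>1$ is a constant of order $\frac{1}{n}\sum_i w_i^2$; hence after $O(\log n)$ layers $B_k(v)$ contains a polynomial number of vertices, say $n^{3/4}$, and likewise from the other endpoint $u$. In the connection phase, assuming $B_k(v)$ and $B_k(u)$ are disjoint (otherwise $u,v$ already lie at distance $O(\log n)$), the expected number of edges between these two sets is at least $c\sum_{i\in B_k(v),\,j\in B_k(u)} w_iw_j/n \ge c'\,n^{3/4}\cdot n^{3/4}/n=\Omega(n^{1/2})$, so, revealing those edges last, a Chernoff bound gives a connecting edge w.h.p.; thus $\mathrm{dist}(u,v)=O(\log n)$, and a union bound over the at most $n^2$ pairs yields the diameter bound a.s. (The matching lower bound, which is why the three cases are genuinely distinct, comes from $E[|B_k(v)|]=O(\nu^k)$: for $k\le(1-\varepsilon)\log n/\log\nu$ this is $O(n^{1-\varepsilon})$, so $|B_k(v)|<n/2$ w.h.p.\ by Markov, leaving some vertex at distance $>k$.)

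For $\beta=3$ the same skeleton works, except the per-layer factor is now $\Theta(\log n)$ instead of a constant (with the care that it is really a truncated moment whose truncation level grows with the current ball size), so the ball reaches linear size after only $O(\log n/\log\log n)$ layers, after which the connection phase finishes the job. For $2<\beta<3$ the governing mechanism is hub-climbing: a vertex of weight $w$ has $\Theta(w)$ neighbors whose weights are size-biased (hence power-law with exponent $\beta-1$), so its heaviest neighbor has weight $\Theta(w^{1/(\beta-2)})$ w.h.p., and $1/(\beta-2)>1$; iterating shows that a BFS from any giant-component vertex reaches a top hub of weight $\Theta(n^{1/(\beta-1)})$ within $O(\log\log n)$ steps, while the heaviest vertices form a densely connected core of $O(\log\log n)$ diameter. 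This last point is the crux of the ``ultra-small world'' estimates, so here I would import the argument of \cite{Reuven2003} after coupling $G_{sf}(\beta)$ with the corresponding Norros--Reittu graph; concatenating the $u$-to-core and core-to-$v$ segments then gives $\mathrm{dist}(u,v)=O(\log\log n)$.

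The hard part is making the branching-process coupling rigorous while (i) tracking the weight composition of each BFS layer and not merely its cardinality --- a genuinely multi-type computation in which one must pin down the leading eigenvalue (mean offspring) of the kernel in each regime --- and (ii) controlling vertex depletion together with the effect of atypically heavy vertices that may be absorbed early and transiently accelerate the exploration; these must be shown harmless before the ball is polynomially large (this is also what makes the $\beta\ge 3$ lower bound delicate). A secondary technicality for $\beta>3$ is that the giant component can contain induced paths of length $\Theta(\log n)$, so before concluding diameter $O(\log n)$ one must rule out induced paths of length $\omega(\log n)$, which is a direct first-moment estimate on long induced paths through degree-$2$ vertices. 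An expedient shortcut around (i)--(ii) is to observe that the Hofstad edge probabilities are sandwiched, up to constant factors, between those of the Norros--Reittu and Chung--Lu models, couple $G_{sf}(\beta)$ with instances of those models, and invoke their known distance/diameter results (e.g.\ \cite{ChungLu2002,Bollob2007,Reuven2003}), verifying only that the quantities in question are insensitive to constant-factor changes in the edge probabilities.
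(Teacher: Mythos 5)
The paper never proves this statement: Theorem~\ref{scalefreeDiam} sits in the Preliminaries and is imported wholesale from the cited literature (\cite{Bollob2007,Reuven2003,Hofstad2016}), so there is no in-paper argument to compare yours against. What you have written is a reconstruction of the proofs that live in those references, and its skeleton is the standard one: a multi-type branching-process coupling of the BFS exploration, a per-layer growth factor governed by the (truncated) second moment $\frac{1}{n}\sum_i w_i^2$ --- which is indeed $\Theta(1)$, $\Theta(\log n)$, or polynomial in $n$ according as $\beta>3$, $\beta=3$, or $2<\beta<3$ --- a two-ball meeting-in-the-middle connection step, and hub-climbing to the core for $2<\beta<3$. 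Your closing ``expedient shortcut'' (sandwich the Hofstad edge probabilities between Norros--Reittu and Chung--Lu and invoke their known distance results) is, in effect, exactly what the paper does, so that route is the faithful one; the longer route is a legitimate but substantial project whose hard parts (the depletion and multi-type bookkeeping, the core argument for $2<\beta<3$) you defer to the same references.

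One substantive caveat you should not treat as a side remark. With minimum weight $w_1=\Theta(1)$, the literature gives $O(\log\log n)$ for $2<\beta<3$ only for \emph{typical distances}; the diameter of the giant component is generically $\Theta(\log n)$ in that regime as well, because of induced paths through constant-weight vertices --- the very obstruction you flag for $\beta>3$ bites hardest here, and no first-moment estimate will rule such paths out, since they actually occur. So the statement as literally written is false for $2<\beta<3$ unless ``diameter'' is read as typical distance or as the diameter of a suitably defined core; your proposed restriction is therefore not an optional tidying-up but a necessary reinterpretation of the theorem (and it propagates to every place the paper uses the $O(\log\log n)$ bound, e.g.\ the $\beta<3$ branch of Theorem~\ref{thmLog23n} and Phase~I of the main algorithm, where an $O(\mathrm{diameter})$ MIS routine is run on the whole subgraph rather than on typical pairs).
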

%~~~~~~~~~~~~~~~~~~~~~~~~~~~~~~~~~~~~
\begin{theorem}\label{barenboimElkinArboriMIS}
	In any graph $G$ with bounded arboricity $a(G) \leq 2^{\log^{1/3}{n}}$, a maximal independent set can be computed in $O(\log^{2/3}{n})$ rounds w.h.p.
\end{theorem}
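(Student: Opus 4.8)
This is a theorem of \cite{Barenboim2016}, which we use below exactly as stated; we only indicate the route one would take. The arboricity hypothesis is used solely through degeneracy: by Theorem~\ref{arboricityDegeneracy} together with the elementary converse $d(G)\le 2a(G)-1$, the assumption $a(G)\le 2^{\log^{1/3}n}$ yields degeneracy, and hence an acyclic orientation of out-degree, $O(2^{\log^{1/3}n})$ --- equivalently an $H$-partition of $V$ into layers of bounded ``forward'' degree. Set $a:=a(G)$, so $\log a\le\log^{1/3}n$; this is the only quantity through which the arboricity will appear in the time bound.

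The plan follows the randomized ``shatter, then finish deterministically'' paradigm, tuned to the sparse setting. One runs a Luby-type random process for $\Theta\big(\sqrt{\log a\cdot\log n}\big)$ rounds; because in a graph of arboricity $a$ the still-undecided vertices thin out --- and their induced degrees drop --- at a rate governed by $\log a$ (the sparsity entering through $|E(H)|\le a\,(|V(H)|-1)$ on every subgraph $H\subseteq G$) rather than by the possibly linear maximum degree, this many rounds suffice, w.h.p., to leave the undecided vertices in connected components of only $\mathrm{polylog}(n)$ size. On each such component one completes an MIS deterministically and in parallel, e.g.\ via a network decomposition; by virtue of the polylogarithmic size alone this costs $2^{O(\sqrt{\log\log n})}$ rounds, which is of lower order. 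Substituting $\log a\le\log^{1/3}n$ gives a total of $\Theta\big(\sqrt{\log^{1/3}n\cdot\log n}\big)=\Theta(\log^{2/3}n)$ rounds, and since only the shattering step is probabilistic the statement holds with probability $1-n^{-\Omega(1)}$.

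The step I expect to be the crux is precisely the shattering estimate in the bounded-arboricity regime, together with the balancing of the two phases. A naive layer-by-layer MIS over an $H$-partition with $\Theta(\log n)$ layers, each handled in $\Theta(\log a)$ rounds, would cost $\Theta(\log n\cdot\log a)=\Theta(\log^{4/3}n)$; obtaining the square-root improvement to $\log^{2/3}n$ requires (i) showing that a short random process drives both the live set and the \emph{effective} degree down at a rate controlled by $a$ rather than by $\Delta$, using the subgraph edge count $|E(H)|\le a\,(|V(H)|-1)$ in place of a degree bound, and (ii) confining the comparatively expensive deterministic work to the already-shattered, polylog-size pieces so that it does not dominate. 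These estimates, and the precise algorithm, are carried out in \cite{Barenboim2016}.
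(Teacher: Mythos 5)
The paper gives no proof of this statement: it appears among the preliminaries as a result imported verbatim from \cite{Barenboim2016}, which is exactly how you treat it. Your accompanying sketch of the underlying argument --- converting arboricity to degeneracy, a $\Theta\big(\sqrt{\log a(G)\cdot\log n}\big)$-round randomized shattering phase, then a deterministic finish on the surviving polylogarithmic-size components --- is a fair account of that source, so there is nothing to reconcile.
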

%~~~~~~~~~~~~~~~~~~~~~~~~~~~~~~~~~~~~
\begin{theorem}\label{barenElkinpolyLog}
	In any graph $G$ with $\Delta = poly(\log{n})$, the MIS problem can be computed in $O\Big(exp\big(\sqrt{\log{\log{n}}}\big)\Big)$ rounds w.h.p.
\end{theorem}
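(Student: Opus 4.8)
The plan is to obtain Theorem~\ref{barenElkinpolyLog} as an immediate corollary of Ghaffari's general-graph MIS algorithm, which (as recalled in the introduction) terminates in $O\big(\log{\Delta} + 2^{O(\sqrt{\log{\log{n}}})}\big)$ rounds w.h.p. Under the hypothesis $\Delta = poly(\log{n})$ the only thing left to do is to bound the $\log{\Delta}$ term and verify that it is dominated by the additive $2^{O(\sqrt{\log{\log{n}}})}$ term.

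First I would unpack the hypothesis: $\Delta = poly(\log{n})$ means $\Delta \leq (\log{n})^{k}$ for some constant $k$ and all large $n$, so $\log{\Delta} \leq k\log{\log{n}} = O(\log{\log{n}})$. Second comes the one estimate that actually matters, namely $\log{\log{n}} = o\big(2^{\sqrt{\log{\log{n}}}}\big)$: setting $t = \log{\log{n}}$, this is the elementary fact that $t\,2^{-\sqrt{t}} \to 0$ as $t\to\infty$, since $\ln{t} - (\ln{2})\sqrt{t} \to -\infty$. Combining the two, $\log{\Delta}$ is absorbed into $2^{O(\sqrt{\log{\log{n}}})}$, and Ghaffari's bound collapses to $2^{O(\sqrt{\log{\log{n}}})} = O\big(exp(\sqrt{\log{\log{n}}})\big)$ rounds, with the high-probability guarantee carried over unchanged. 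Alternatively one may invoke the Barenboim--Elkin bound $O\big(\log^{2}{\Delta} + 2^{O(\sqrt{\log{\log{n}}})}\big)$; since $\log^{2}{\Delta} = O\big((\log{\log{n}})^{2}\big)$ is still $o\big(2^{\sqrt{\log{\log{n}}}}\big)$, the same conclusion follows.

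I do not expect a genuine obstacle: the statement is a black-box consequence of a known MIS algorithm plus a single asymptotic comparison. The only point requiring a little care is the bookkeeping of constants — the $poly$ factor enters $\log{\Delta}$ only through a constant multiple, and both that constant and the change of base between $2^{(\cdot)}$ and $exp(\cdot)$ are meant to be absorbed into the $O(\cdot)$ sitting in the exponent, so the final bound $O\big(exp(\sqrt{\log{\log{n}}})\big)$ should be read as $2^{O(\sqrt{\log{\log{n}}})}$.
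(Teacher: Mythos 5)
Your proposal is correct: the paper states this theorem as a black-box preliminary imported from Barenboim--Elkin/Ghaffari and gives no proof of its own, and your derivation from Ghaffari's $O(\log{\Delta}) + 2^{O(\sqrt{\log{\log{n}}})}$ bound (with $\log{\Delta} = O(\log{\log{n}}) = o\big(2^{\sqrt{\log{\log{n}}}}\big)$ absorbed into the second term) is exactly the intended justification. Your closing caveat is also the right one to flag: the bound is genuinely $2^{O(\sqrt{\log{\log{n}}})}$ rather than literally $O\big(exp(\sqrt{\log{\log{n}}})\big)$, and the paper's statement should be read with the constant inside the exponent.
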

%~~~~~~~~~~~~~~~~~~~~~~~~~~~~~~~~~~~~
\begin{theorem}\label{localModelMIS}
	In any distributed network, the MIS problem can be computed in $O(diameter)$ rounds on Local model w.h.p.
\end{theorem}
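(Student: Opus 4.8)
The plan is to use the folklore fact that in the \textit{Local} model, where messages may be arbitrarily long, a node can forward its entire current view of the network in a single round. First I would argue by induction on $r$ that after $r$ communication rounds every node $v$ knows, as a labelled graph, the subgraph of $G$ induced by the set of vertices at distance at most $r$ from $v$. The base case $r=0$ is immediate, and in the inductive step each node sends its whole round-$(r-1)$ view to all neighbours; the union of $v$'s own view with the views received from its neighbours is exactly its round-$r$ view.

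Next I would invoke connectivity of the network to turn this into a termination rule. A vertex at distance $d$ from $v$ is first learned precisely in round $d$, so if in some round $r$ node $v$ learns no new vertex, there is no vertex at distance $r$ from $v$ and hence, by connectivity, none at any distance greater than $r$; thus $v$ already holds a complete description of $G$. This occurs by round $\mathrm{ecc}(v)+1 \le \mathrm{diameter}(G)+1$, and since each node can apply this test locally no global coordination or prior knowledge of the diameter is required. Consequently, within $O(\mathrm{diameter})$ rounds every node possesses the full graph $G$.

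The last step is a purely local, deterministic computation: fix the canonical rule ``scan the vertices in increasing order of ID and add a vertex to the set exactly when none of its already-selected neighbours is in the set.'' Each node runs this rule on the copy of $G$ it has learned and outputs whether it belongs to the resulting set $U$. Since every node computes on the same graph with the same deterministic rule, the local outputs are globally consistent; $U$ is independent by construction, and it is maximal because every vertex outside $U$ was blocked by a neighbour that is in $U$. No randomness is used, so the algorithm is correct with probability $1$, which in particular gives the w.h.p.\ claim.

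The only delicate point — and the nearest thing to an obstacle — is termination without knowing $\mathrm{diameter}(G)$ in advance: one must check that the ``no new vertex this round'' test is sound (it fires only once the whole connected component is known, which is where connectivity is used) and that it fires within $O(\mathrm{diameter})$ rounds at every node (which follows from $\mathrm{ecc}(v)\le\mathrm{diameter}(G)$). If $G$ is disconnected one reads $\mathrm{diameter}$ as the maximum diameter over connected components and runs the same argument inside each component.
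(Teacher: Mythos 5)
Your proof is correct and is exactly the standard full-information argument that this theorem rests on: the paper states the result without proof as known background (citing the literature), so there is no competing argument to compare against, and your version even adds a clean local termination rule that avoids assuming the diameter is known. The only implicit assumption you use beyond the paper's stated model is that nodes carry unique identifiers (needed both for the labelled-view induction and for the canonical greedy rule to be globally consistent), which is standard in the \emph{Local} model; and, as you note, the ``w.h.p.'' claim is trivially satisfied because the algorithm is deterministic and always correct.
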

%_______________________________________________________________________________________________________________________________________
%_______________________________________________________________________________________________________________________________________
%_______________________________________________________________________________________________________________________________________
%_____________________________________________________________________Section2___________________________________________________________
%_______________________________________________________________________________________________________________________________________
%_______________________________________________________________________________________________________________________________________
\section{Computing an MIS in $O(\log^{2/3}{n})$ rounds}
In this section, first, we compute the degeneracy of graphs $G_{sf}(\beta\geq 3)$. Next, by using the result of computing the degeneracy, we compute the arboricity of graphs $G_{sf}(\beta\geq 3)$, as well. Then, we prove the MIS problem can be solved on $G_{sf}(\beta\geq 3)$ with the time complexity of $O(\log^{2/3}{n})$ rounds w.h.p. and on $G_{sf}(\beta)$ with the time complexity of $O(\log^{2/3}{n})$ rounds a.s.
%~~~~~~~~~~~~~~~~~~~~~~~~~~~~~~~~~~~~

For computing the degeneracy of $G_{sf}(\beta \geq 3)$, we propose Algorithm \ref{alg:MDegeneracy} which is the modified version of Algorithm \ref{algDegeneracy}. In each iteration of Algorithm \ref{algDegeneracy}, the vertex with minimum degree has been removed; in contrast, in Algorithm \ref{alg:MDegeneracy}, the vertex with minimum weight has been removed. We denote the output of Algorithm \ref{alg:MDegeneracy} with  modified degeneracy, $d_{w}(G)$. In what follows, first, we prove a lemma about modified degeneracy of $G_{sf}(\beta \geq 3)$. Then, we compute the degeneracy of $G_{sf}(\beta \geq 3)$. 
%~~~~~~~~~~~~~~~~~~~~~~~~~~~~~~~~~~~~
\begin{algorithm}
	\renewcommand{\baselinestretch}{1}
	\caption{Modified Degeneracy Algorithm Based on Weights of Vertices}\label{alg:MDegeneracy}
	\begin{algorithmic}
		\State{\textbf{Input:} Graph $G_{sf}(\beta)=(V,E)$ and $\sigma_{w}$}
		\State{\textbf{Output:} The output of this algorithm is $d_{w}(G)$.}
		\State{$d_{w}(G) = 0, n = |V|, H=G$}
		\For{$j=1$ to $n$}
		\State{let $v$ be the vertex with the weight $w_{j}$.}
		\If{$d_{w}(G) < deg_{H}(v)$}
		\State{$d_{w}(G) = deg_{H}(v)$}
		\EndIf
		\State{$H = H_{[V(H) - \{v\}]}$} \Comment{Delete $v$ from $H$}
		\EndFor
	\end{algorithmic}
\end{algorithm}
%~~~~~~~~~~~~~~~~~~~~~~~~~~~~~~~~~~~~
\begin{lemma}\label{lemDegeneracy}
	For each iteration of Algorithm \ref{alg:MDegeneracy}, the modified degeneracy, $d_{w}(G)$, for graphs $G_{sf}(\beta \geq 3)$ is less than $2^{\log^{1/3}{n}}$ w.h.p.
\end{lemma}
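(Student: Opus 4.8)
The plan is to reduce the claim to a single-vertex degree concentration estimate inside a nested family of induced subgraphs. First I would unwind the definition of Algorithm \ref{alg:MDegeneracy}: after $j-1$ deletions the working graph $H$ is exactly the induced subgraph $H_j$ of $G_{sf}(\beta\geq 3)$ on the vertex set $\{j,j+1,\dots,n\}$, that is, on the vertices of weight at least $w_j$. Hence the value examined in iteration $j$ is $deg_{H_j}(j)$, and $d_w(G)=\max_{1\le j\le n} deg_{H_j}(j)$ after termination (and the running value of $d_w(G)$ never exceeds this maximum). So it suffices to prove that $\Pr[\exists\, j:\ deg_{H_j}(j)\ge 2^{\log^{1/3}n}]=n^{-\Omega(1)}$.

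Next I would invoke Theorem \ref{lemEXi}: for $\beta\ge 3$ the subgraph $H_j$, being built from the vertices of weight at least $w_j$, satisfies $E[deg_{H_j}(j)]=O(w_j^{-\beta+3})=O(1)$. The point I would stress here is that this $O(1)$ is \emph{uniform} in $j$: since $-\beta+3\le 0$ and the weights are bounded below by a positive constant, $w_j^{-\beta+3}$ is bounded by an absolute constant for every $j$, so there is a single $\mu=O(1)$ with $E[deg_{H_j}(j)]\le\mu$ for all $j$. Now $deg_{H_j}(j)$ is a sum of independent Bernoulli random variables, namely the indicators of the edges $(j,k)$ with $k>j$, which are present independently in the inhomogeneous random graph model. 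A Chernoff upper-tail bound therefore applies: taking the deviation level $t=2^{\log^{1/3}n}\to\infty$, the standard estimate gives $\Pr[deg_{H_j}(j)\ge t]\le (e\mu/t)^{t}\le 2^{-t}=2^{-2^{\log^{1/3}n}}$ once $n$ is large enough that $e\mu/t\le 1/2$.

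Finally I would take a union bound over the $n$ iterations. Because $\log^{1/3}n=\omega(\log\log n)$, we have $2^{\log^{1/3}n}\ge c\log n$ for every fixed constant $c$ once $n$ is large, hence $2^{-2^{\log^{1/3}n}}\le n^{-c}$; choosing $c$ large and summing over $j=1,\dots,n$ yields $\Pr[\exists\, j:\ deg_{H_j}(j)\ge 2^{\log^{1/3}n}]\le n^{-\Omega(1)}$. Consequently, with high probability every iteration keeps $deg_{H_j}(j)<2^{\log^{1/3}n}$, so $d_w(G)<2^{\log^{1/3}n}$ w.h.p., which is the lemma.

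I expect the only delicate points to be (i) justifying that the constant in $E[deg_{H_j}(j)]=O(1)$ can be chosen uniformly over all iterations $j$ (so that a single Chernoff estimate suffices), and (ii) checking that the doubly-exponentially small tail $2^{-2^{\log^{1/3}n}}$ genuinely dominates every inverse polynomial in $n$ after the union bound. Both reduce to the elementary growth comparison $\log^{1/3}n=\omega(\log\log n)$; the rest is a routine use of edge-independence of the model and the Chernoff bound.
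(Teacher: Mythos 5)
Your proposal is correct and follows essentially the same route as the paper: identify the working graph at iteration $j$ as the induced subgraph on vertices of weight at least $w_j$, invoke Theorem \ref{lemEXi} to get $E[deg_{H}(v)]=O(1)$ for $\beta\geq 3$, and apply a Chernoff upper-tail bound at level $2^{\log^{1/3}n}$ to obtain a superpolynomially small failure probability. The only organizational difference is that you carry out the union bound over the $n$ iterations inside the lemma, whereas the paper proves only the per-iteration tail bound here and defers the union bound to the proof of Theorem \ref{thmDegeneracy}.
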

\begin{proof}
	It is sufficient to show that the degree of vertex $v$ for each iteration is less than $2^{\log^{1/3}{n}}$ w.h.p., i.e.
	\begin{ceqn}
		\begin{equation}\label{eq1ThmDegeneracy}
		P\{deg_{H}(v) \geq 2^{\log^{1/3}{n}}\}  \leq \frac{1}{n^{\omega (1)}}
		\end{equation}
	\end{ceqn}
	where $\omega (1)$ in Relation \eqref{eq1ThmDegeneracy} is strictly larger than 2.
	
	By Chernoff bound \cite{Dubhashi2009}, we have
	\begin{ceqn}
		\begin{align*}
		P\{X \geq a\}\leq \Big(\frac{exp(\frac{a}{E[X]}-1)}{(\frac{a}{E[X]})^{\frac{a}{E[X]}}}\Big)^{E[X]}
		\end{align*}
	\end{ceqn}
	Let us applying the above formula to Relation \eqref{eq1ThmDegeneracy}, we have
	\begin{ceqn}
		\begin{align*}
		P\{deg_{H}(v) \geq 2^{\log^{1/3}{n}}\} 
		& \leq \Big(\frac{e^{\frac{2^{\log^{1/3}{n}}}{E[deg_{H}(v)]} -1}}{(\frac{2^{\log^{1/3}{n}}}{E[deg_{H}(v)]})^{\frac{2^{\log^{1/3}{n}}}{E[deg_{H}(v)]}}} \Big)^{E[deg_{H}(v)]} 
		\\& \leq \Big(\frac{e^{\frac{2^{\log^{1/3}{n}}}{E[deg_{H}(v)]} }}{(\frac{2^{\log^{1/3}{n}}}{E[deg_{H}(v)]})^{\frac{2^{\log^{1/3}{n}}}{E[deg_{H}(v)]}}} \Big)^{E[deg_{H}(v)]} 
		\end{align*}
	\end{ceqn}
	and with simplifying the above relations, we obtain
	\begin{ceqn}
		\begin{align*}
		 P\{deg_{H}(v) \geq 2^{\log^{1/3}{n}}\} 
		 \leq \Big(\frac{e^{2^{\log^{1/3}{n}}}}{(\frac{2^{\log^{1/3}{n}}}{E[deg_{H}(v)]})^{2^{\log^{1/3}{n}}}} \Big) 
		& \leq \Big(\frac{\big(e\times E[deg_{H}(v)]\big)^{2^{\log^{1/3}{n}}}}{(2^{\log^{1/3}{n}})^{2^{\log^{1/3}{n}}}} \Big) \\
		&  \leq \Big(\frac{\big(O(1)\big)^{2^{\log^{1/3}{n}}}}{(2^{\log^{1/3}{n}})^{2^{\log^{1/3}{n}}}} \Big) \qquad \text{\big(By Rel. \ref{eq3LemEXi}\big)}
		\end{align*}
	\end{ceqn}
	Since $\Big(\frac{O(1)}{2^{\log^{1/3}{n}}} \Big)^{2^{\log^{1/3}{n}}} \leq \frac{1}{n^{\omega (1)}}$, then the proof is complete. 
\end{proof}
%×××××××××××××××××××××××××××××××
Let us explain the purpose of defining the modified degeneracy and proposing Lemma \ref{lemDegeneracy}. Consider the $j$th iteration of Algorithm \ref{algDegeneracy} and \ref{alg:MDegeneracy} in which $v$ is the vertex with minimum degree and weight, respectively. According to Theorem \ref{lemEXi}, Relation \ref{eq3LemEXi} can be applied to $E[deg_{H}(v)]$ in Algorithm \ref{alg:MDegeneracy}; on the other hand, this relation can not be applied to $E[deg_{H}(v)]$ in Algorithm \ref{algDegeneracy}. For this reason, we defined the modified degeneracy and proposed Lemma \ref{lemDegeneracy}. In the following theorem, by using Lemma \ref{lemDegeneracy}, we compute $d(G)$.
%~~~~~~~~~~~~~~~~~~~~~~~~~~~~~~~~
\begin{theorem}\label{thmDegeneracy}
	The degeneracy, $d(G)$, for graphs $G_{sf}(\beta \geq 3)$ is less than $2^{\log^{1/3}{n}}$ w.h.p.
\end{theorem}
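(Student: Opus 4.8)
The plan is to use Algorithm \ref{alg:MDegeneracy} as a certificate for an upper bound on the degeneracy, and then invoke Lemma \ref{lemDegeneracy}. First I would note that Algorithm \ref{alg:MDegeneracy} deletes vertices in a fixed, \emph{deterministic} order, namely by increasing weight $w_{1}\leq w_{2}\leq \dots \leq w_{n}$. Hence the graph $H$ present at the start of iteration $j$ is exactly the induced subgraph of $G_{sf}(\beta\geq 3)$ on the set of vertices of weight at least $w_{j}$, which is precisely the subgraph $H$ of Theorem \ref{lemEXi}; this is what makes Relation \eqref{eq3LemEXi}, and therefore Lemma \ref{lemDegeneracy}, applicable at every single iteration.

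The key deterministic step is the inequality $d(G)\leq d_{w}(G)$ for any graph equipped with a weight order. Let $H'\subseteq G$ be an arbitrary nonempty induced subgraph and let $v$ be a vertex of $H'$ of minimum weight. Every other vertex of $H'$ has weight at least $w(v)$, so at the moment Algorithm \ref{alg:MDegeneracy} deletes $v$, all of $V(H')$ is still present in the current graph $H$; thus $H'$ is an induced subgraph of that $H$, and $\min_{u\in V(H')}deg_{H'}(u)\leq deg_{H'}(v)\leq deg_{H}(v)\leq d_{w}(G)$. Taking the maximum over all such $H'$ and using the definition of degeneracy gives $d(G)\leq d_{w}(G)$. (Equivalently: $d_{w}(G)$ is the maximum forward-degree in the increasing-weight vertex ordering, and any ordering upper-bounds the degeneracy.)

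Then I would combine this with Lemma \ref{lemDegeneracy} via a union bound over the $n$ iterations: the event $\{d_{w}(G)\geq 2^{\log^{1/3}{n}}\}$ is contained in the union of the $n$ events $\{deg_{H}(v)\geq 2^{\log^{1/3}{n}}\}$, one per iteration, each of probability at most $n^{-\omega(1)}$ with $\omega(1)$ strictly larger than $2$ by \eqref{eq1ThmDegeneracy}. The total failure probability is therefore at most $n\cdot n^{-\omega(1)}=n^{-\Omega(1)}$, so $d_{w}(G)<2^{\log^{1/3}{n}}$ w.h.p., and by the previous paragraph $d(G)\leq d_{w}(G)<2^{\log^{1/3}{n}}$ w.h.p.

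The only point requiring care is the second step — cleanly justifying that processing vertices in weight order yields a legitimate upper bound on the degeneracy, including the corner case of tied weights, where one argues that the tie is broken by the algorithm's indexing and the same argument goes through verbatim since all of $V(H')$ is still present when the minimum-weight vertex of $H'$ is removed. Everything else is a union bound over $n$ events whose individual tail estimates are already furnished by Lemma \ref{lemDegeneracy}.
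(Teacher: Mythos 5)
Your proof is correct, but it takes a genuinely different route from the paper's. The paper works directly with Algorithm \ref{algDegeneracy} (minimum-\emph{degree} removal): at iteration $j$ it lets $\tilde{H}$ be the current subgraph, $v$ its minimum-degree vertex, $u$ its minimum-weight vertex, and $H$ the weight-threshold subgraph above $w_{u}$, and then invokes a stochastic-domination inequality attributed to \cite{Friedrich2015}, namely $P\{deg_{\tilde{H}}(v) \geq 2^{\log^{1/3}{n}}\} \leq P\{deg_{H}(u) \geq 2^{\log^{1/3}{n}}\}$, to transfer the tail estimate of Lemma \ref{lemDegeneracy} to the minimum-degree vertex; the union bound over the $n$ iterations then finishes exactly as in your write-up. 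You instead replace that probabilistic comparison with the purely deterministic inequality $d(G) \leq d_{w}(G)$ --- the standard fact that the degeneracy is at most the maximum back-degree of any fixed vertex ordering, here the increasing-weight ordering --- and then apply Lemma \ref{lemDegeneracy} and the union bound directly to $d_{w}(G)$. Your route is more self-contained: it avoids importing the domination argument from \cite{Friedrich2015}, uses only the tail bound the paper has already established for Algorithm \ref{alg:MDegeneracy}, and your explicit remarks on tie-breaking and on why the graph present at iteration $j$ of Algorithm \ref{alg:MDegeneracy} is precisely the weight-threshold subgraph of Theorem \ref{lemEXi} close small gaps that the paper leaves implicit. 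What the paper's version buys in exchange is a per-iteration statement about the degrees actually encountered by the genuine degeneracy algorithm, but that extra information is not needed to bound $d(G)$, so your argument is a clean and arguably preferable substitute.
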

\begin{proof}
	Let $\tilde{H}$ be an induced subgraph of $G_{sf}$ in the $j$th iteration of Algorithm \ref{algDegeneracy} in which $v$ be the vertex with minimum degree. Consider $u$ be the vertex with the minimum weight in $\tilde{H}$ and its weight in the sequence $\sigma_{w}$ (the sequence of weights in graph $G_{sf}$) be $w_{u}$. Let $H$ be another induced subgraph of $G$ that is constructed by vertices with weigths larger than $w_{u}$. By an argument presented in \cite{Friedrich2015}, we have
	\begin{ceqn}
		\begin{equation}\label{eq:thmDegeneracy1}
		P\{deg_{\tilde{H}}(v) \geq 2^{\log^{1/3}{n}}\}  \leq P\{deg_{H}(u) \geq 2^{\log^{1/3}{n}}\}
		\end{equation}
	\end{ceqn}
	And by applying Relation \eqref{eq1ThmDegeneracy} to Relation \eqref{eq:thmDegeneracy1}, for each iteration of Algorithm \ref{alg:MDegeneracy} we have
	\begin{ceqn}
		\begin{equation}\label{eq:thmDegeneracy2}
		P\{deg_{\tilde{H}}(v) \geq 2^{\log^{1/3}{n}}\}  \leq \frac{1}{n^{\omega (1)}}
		\end{equation}
	\end{ceqn} 
	Therefore, in each iteration of Algorithm \ref{algDegeneracy}, $deg_{\tilde{H}}(v)$ is less than $2^{\log^{1/3}{n}}$ w.h.p.
	
	What is left is to show that the degeneracy for these graphs is less than $2^{\log^{1/3}{n}}$ w.h.p., i.e.
	\begin{ceqn}
		\begin{align*}
		\Pr\{d(G) < 2^{\log^{1/3}{n}}\} \geq 1- n^{-\Omega (1)}
		\end{align*}
	\end{ceqn}
	We can now reformulate the above relation as follows
	\begin{ceqn}
		\begin{align*}
		 \Pr\{d(G) < 2^{\log^{1/3}{n}}\}
		&  = \Pr\{X_{1} < 2^{\log^{1/3}{n}} \cap ... ,\cap X_{n} < 2^{\log^{1/3}{n}}\} \\
		&  \geq 1- n^{-\Omega (1)}
		\end{align*}
	\end{ceqn}
	where $X_{j}$ $(1 \leq j \leq n)$ is a random variable refers to the degree of vertex $v$ in $j$th iteration of Algorithm \ref{algDegeneracy}. We have
	\begin{ceqn}
		\begin{align*}
		 \Pr\{X_{1} < 2^{\log^{1/3}{n}} \cap ... ,\cap X_{n} < 2^{\log^{1/3}{n}}\} 
		&    = 1- \Pr\{X_{1} \geq 2^{\log^{1/3}{n}} \cup ... ,\cup X_{n} \geq 2^{\log^{1/3}{n}}\} \\
		&    \geq 1 - \sum_{i=1}^{n}\Pr\{X_{i} \geq 2^{\log^{1/3}{n}}\} \\
		&    \geq 1- n\frac{1}{n^{\omega (1)}} 
		\end{align*}
	\end{ceqn}
	since $\omega (1)$ in Lemma \ref{lemDegeneracy} is strictly larger than 2, the proof is complete.
\end{proof}
%~~~~~~~~~~~~~~~~~~~~~~~~~~~~~~~~
\begin{corollary}\label{corArboricity1}
	With respect to Theorem \ref{arboricityDegeneracy} and \ref{thmDegeneracy}, the arboricity of graph $G_{sf}(\beta \geq 3)$ is less than $2^{\log^{1/3}{n}}$ w.h.p.
\end{corollary}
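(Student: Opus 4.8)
The plan is simply to chain the two results already in hand. Theorem \ref{arboricityDegeneracy} gives the deterministic inequality $a(G) \le d(G)$ valid for \emph{every} graph $G$, hence for every realization drawn from $\textsl{g}_{sf}(\beta \geq 3)$. Therefore, pointwise on the underlying probability space, the event $\{d(G) < 2^{\log^{1/3}{n}}\}$ is contained in the event $\{a(G) < 2^{\log^{1/3}{n}}\}$, so
\begin{ceqn}
	\begin{align*}
	\Pr\{a(G) < 2^{\log^{1/3}{n}}\} \;\geq\; \Pr\{d(G) < 2^{\log^{1/3}{n}}\}.
	\end{align*}
\end{ceqn}

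By Theorem \ref{thmDegeneracy} the right-hand side is at least $1 - n^{-\Omega(1)}$, so the same lower bound transfers to $\Pr\{a(G) < 2^{\log^{1/3}{n}}\}$, which is precisely the claimed w.h.p.\ statement. There is no genuine obstacle here: the only point to be careful about is that the monotonicity step is applied realization-by-realization rather than in expectation, which is legitimate because Theorem \ref{arboricityDegeneracy} is a purely combinatorial fact about arbitrary graphs; no new probabilistic estimate beyond the one supplied by Theorem \ref{thmDegeneracy} is required. (Alternatively, one could re-derive the bound directly from the Nash–Williams formula $a(G) = \max_{H \subseteq G,\, |V(H)| \geq 2} |E(H)|/(|V(H)|-1)$ together with the degree bounds used in Lemma \ref{lemDegeneracy}, but this is unnecessary given Theorem \ref{arboricityDegeneracy}.)
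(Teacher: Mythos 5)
Your proof is correct and follows exactly the route the paper intends: the corollary is stated as an immediate consequence of the deterministic inequality $a(G) \leq d(G)$ from Theorem \ref{arboricityDegeneracy} combined with the w.h.p.\ bound on $d(G)$ from Theorem \ref{thmDegeneracy}, and your realization-by-realization containment of events is the right (and only needed) justification. No gap here.
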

%~~~~~~~~~~~~~~~~~~~~~~~~~~~~~~~~
\begin{corollary}\label{corMISArboricity}
	According to Theorem \ref{barenboimElkinArboriMIS} and Corollary \ref{corArboricity1}, the time complexity of computing an MIS on graph $G_{sf}(\beta \geq 3)$ is $O(\log^{2/3}{n})$ w.h.p.
\end{corollary}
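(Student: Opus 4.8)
The plan is to obtain the statement as a direct composition of Corollary \ref{corArboricity1} and Theorem \ref{barenboimElkinArboriMIS}, where the only point requiring care is the bookkeeping of the two ``w.h.p.'' guarantees. First I would fix the input graph $G = G_{sf}(\beta \geq 3)$ and let $\mathcal{A}$ denote the event $\{\, a(G) < 2^{\log^{1/3}{n}} \,\}$. By Corollary \ref{corArboricity1} there is a constant $c_{1} > 0$ with $\Pr[\mathcal{A}] \geq 1 - n^{-c_{1}}$.

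Next, conditioned on $\mathcal{A}$, the graph $G$ meets the hypothesis $a(G) \leq 2^{\log^{1/3}{n}}$ of Theorem \ref{barenboimElkinArboriMIS}. Since the random bits used by the distributed MIS algorithm are independent of the random graph model that produced $G$, the guarantee of Theorem \ref{barenboimElkinArboriMIS} applies verbatim on this conditional event: there is a constant $c_{2} > 0$ such that, on $\mathcal{A}$, the algorithm terminates within $O(\log^{2/3}{n})$ rounds and outputs a correct MIS with conditional probability at least $1 - n^{-c_{2}}$. Denote this conditional success event by $\mathcal{S}$.

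Finally I would apply a union bound over the two failure modes: the probability that the algorithm fails to produce an MIS within $O(\log^{2/3}{n})$ rounds is at most $\Pr[\,\overline{\mathcal{A}}\,] + \Pr[\,\overline{\mathcal{S}} \mid \mathcal{A}\,] \leq n^{-c_{1}} + n^{-c_{2}} = n^{-\Omega(1)}$, which is precisely the w.h.p. claim for $G_{sf}(\beta \geq 3)$.

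I do not expect a genuine obstacle here: all the substantive work — controlling the arboricity of the inhomogeneous random graph — was already carried out in Lemma \ref{lemDegeneracy}, Theorem \ref{thmDegeneracy} and Corollary \ref{corArboricity1}. The one mild subtlety is whether Theorem \ref{barenboimElkinArboriMIS} should be read as giving a deterministic round bound together with a w.h.p. correctness guarantee, or a w.h.p. round bound; in either reading the union-bound argument above is unchanged, since an $n^{-\Omega(1)}$ additive failure term is all that is needed to preserve the w.h.p. conclusion.
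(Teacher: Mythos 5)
Your proposal is correct and matches the paper's (implicit) reasoning: the paper states this as an immediate corollary of Theorem \ref{barenboimElkinArboriMIS} and Corollary \ref{corArboricity1} with no further argument, and your explicit conditioning on the arboricity event plus the union bound over the two $n^{-\Omega(1)}$ failure probabilities is exactly the bookkeeping that justifies composing the two ``w.h.p.'' statements.
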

%~~~~~~~~~~~~~~~~~~~~~~~~~~~~~~~~
\begin{theorem}\label{thmLog23n}
	An MIS on scale-free networks with an arbitrary $\beta$ can be computed with the time complexity of $O(\log^{2/3}{n})$ a.s.
\end{theorem}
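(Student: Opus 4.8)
The plan is to split on the power-law exponent into the two regimes $\beta \geq 3$ and $2 < \beta < 3$ that together exhaust the admissible range of the model, and to dispatch each with a result already established. For $\beta \geq 3$ there is nothing genuinely new: Corollary \ref{corMISArboricity} already yields an MIS on $G_{sf}(\beta \geq 3)$ in $O(\log^{2/3}{n})$ rounds w.h.p., and since any event of probability $1 - n^{-\Omega(1)}$ in particular has probability tending to $1$, the same bound holds a.s.\ in this regime. So the only case carrying real content is $2 < \beta < 3$.

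For $2 < \beta < 3$ I would route through the diameter rather than through arboricity. By Theorem \ref{scalefreeDiam}, the diameter of $G_{sf}(\beta)$ with $2 < \beta < 3$ is $O(\log\log{n})$ a.s. By Theorem \ref{localModelMIS}, in the Local model an MIS can be computed in $O(\mathit{diameter})$ rounds; substituting the diameter estimate, an MIS is computed in $O(\log\log{n})$ rounds, a.s.\ (the intersection of the a.s.\ event controlling the diameter with the high-probability event of Theorem \ref{localModelMIS} still has probability tending to $1$). Finally $\log\log{n} = o(\log^{2/3}{n})$, hence $O(\log\log{n}) \subseteq O(\log^{2/3}{n})$, and the bound follows in this regime as well. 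Combining the two regimes gives, for every admissible $\beta$, an MIS on $G_{sf}(\beta)$ in $O(\log^{2/3}{n})$ rounds a.s., which is the assertion.

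The one point that needs care — and the closest thing to an obstacle — is that for $\beta = 3$ one must \emph{not} try to use the diameter, which Theorem \ref{scalefreeDiam} bounds only by $O(\log{n}/\log\log{n})$, a quantity asymptotically larger than $\log^{2/3}{n}$; the value $\beta = 3$ has to be absorbed into the $\beta \geq 3$ branch and handled through the arboricity bound of Corollary \ref{corArboricity1} together with Theorem \ref{barenboimElkinArboriMIS}. A second, purely bookkeeping, point is to be consistent about the two probability notions: downgrading "w.h.p." to "a.s." where needed, and noting that intersecting an a.s.\ structural statement with a w.h.p.\ algorithmic guarantee still produces an a.s.\ guarantee. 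Neither is a substantive difficulty, so I expect the proof to be short.
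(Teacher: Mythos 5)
Your proposal is correct and follows essentially the same route as the paper: split into $\beta < 3$ (handled via the $O(\log\log n)$ diameter bound of Theorem \ref{scalefreeDiam} plus Theorem \ref{localModelMIS}) and $\beta \geq 3$ (handled via Corollary \ref{corMISArboricity}). Your explicit remarks about absorbing $\beta = 3$ into the arboricity branch and about downgrading w.h.p.\ to a.s.\ are careful points the paper leaves implicit, but they do not change the argument.
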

%~~~~~~~~~~~~~~~~~~~~~~~~~~~~~~~~
\begin{proof}
	We have divided the proof into two following cases:
	\begin{enumerate}
		\item
		$\beta < 3$) According to Theorem \ref{scalefreeDiam}, the diameter of graph $G_{sf}(\beta<3)$ is $O\big(\log{\log{n}}\big)$. Thus, with respect to Theorem \ref{localModelMIS}, the time complexity of computing a distributed MIS on these graphs is $O\big(\log{\log{n}}\big)$ a.s. Since $O\big(\log{\log{n}}\big) < O(\log^{2/3}{n})$, the proof for this case is complete.
		\item
		$\beta \geq 3$) With respect to Corollary \ref{corMISArboricity}, the proof for this case is straightforward as well.
	\end{enumerate}
\end{proof}
%~~~~~~~~~~~~~~~~~~~~~~~~~~~~~~~~
Now, according to Theorem \ref{thmLog23n}, we propose Algorithm \ref{algLog23n} to compute an MIS with the time complexity of $O(\log^{2/3}{n})$ rounds.
%~~~~~~~~~~~~~~~~~~~~~~~~~~~~~~~~
\begin{algorithm}
	\renewcommand{\baselinestretch}{1}
	\caption{The MIS Algorithm on $G_{sf}(\beta)$ in $O(\log^{2/3}{n})$ Rounds}\label{algLog23n}
	\begin{algorithmic}
		\State{\textbf{Input:} Graph $G_{sf}(\beta)=(V,E)$.}
		\State{\textbf{Output:} The output is an MIS.}
		\If{$\beta \leq 3$}
		\State{Run the trivial $O(diameter)$ MIS algorithm on $G_{sf}(\beta)$}
		\Else
		\State{Run Barenboim-Elkin MIS algorithm \cite{Barenboim2016} on $G_{sf}(\beta)$}
		\EndIf
	\end{algorithmic}
\end{algorithm}
%_______________________________________________________________________________________________________________________________________
%_______________________________________________________________________________________________________________________________________
%_______________________________________________________Section3_________________________________________________________________________
%_______________________________________________________________________________________________________________________________________
%_______________________________________________________________________________________________________________________________________
\section{Computing an MIS in $O(\frac{\log{n}}{\log{\log{n}}})$ rounds} 
In this section, we describe an algorithm to compute an MIS with the time complexity of $O(\frac{\log{n}}{\log{\log{n}}})$ rounds on $G_{sf}(\beta)$ graphs. The main challenge in the problem of MIS on scale-free networks with the time complexity of $O(\frac{\log{n}}{\log{\log{n}}})$ is the case of $\beta >3$. The key idea of the algorithm for $\beta > 3$ is as follows. 

We have divided the algorithm into two separate phases. In the first phase, the induced subgraph $G_{\RN{1}}$ is constructed by vertices with degree larger than $\log{n}\log^{*}{n}$. We prove that the weights of vertices in $G_{\RN{1}}$ follows the power-law distribution a.s. and $G_{\RN{1}}$ is a scale-free network with power-law exponent $\beta' =2$ a.s. Thus, by using Theorem \ref{scalefreeDiam} and \ref{localModelMIS}, an MIS on $G_{\RN{1}}$ can be computed with the time complexity of $O(\log{\log{n}})$ rounds a.s. 

In the second phase, another induced subgraph $G_{\RN{2}}$ is constructed such that $V(G_{\RN{2}}) = \{v | v \in V-V(G_{\RN{1}}), v \notin N\big(MIS(G_{\RN{1}})\big) \}$, i.e. $G_{\RN{2}}$ consists of all vertices that are neither in $MIS$ nor in $N(MIS)$ after running the first phase. Since the maximum degree of $G_{\RN{2}}$ is $poly(\log{n})$, by using Theorem \ref{barenElkinpolyLog}, an MIS on $G_{\RN{2}}$ can be computed with the time complexity of $exp\big(\sqrt{\log{\log{n}}}\big)$ rounds w.h.p. 

In the following, in Theorems \ref{nDividlog}-\ref{thmIsScaleFree}, the weights of vertices is investigated, but in Theorem \ref{thmRelWeightDegLogLogStar} the degree of vertices is considered. Let us begin with the following theorem which describes how many of vertices in $G_{sf}(\beta)$ have weights greater than $\Theta (1) \log{n}$.
%~~~~~~~~~~~~~~~~~~~~~~~~~~~~~~~~
\begin{theorem}\label{nDividlog}
	In any graph $G_{sf}(\beta)$, the number of vertices with weight greater than $\Theta (1)\log{n}$ is $\Theta \big([\frac{n}{\log^{\beta -1}{n}}]\big)$.
\end{theorem}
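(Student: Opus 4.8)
The plan is to read the count directly off the power-law bounds on the weight sequence, since the ECCDF $F_{n}(w) = \Pr[W \geq w]$ is, by construction, exactly the fraction of vertices whose weight is at least $w$. Hence the number of vertices with weight at least $w$ equals $n\,F_{n}(w)$, up to rounding to an integer, which is precisely what the bracket $[\,\cdot\,]$ in the statement records.

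First I would fix the threshold $w = c\log n$ with $c = \Theta(1)$ the relevant constant, and invoke Definition \ref{dfnPowerLawWeight}: there are constants $\alpha_{1},\alpha_{2} > 0$ with
\[
\alpha_{1}\,(c\log n)^{-\beta+1} \;\leq\; F_{n}(c\log n) \;\leq\; \alpha_{2}\,(c\log n)^{-\beta+1}.
\]
Multiplying through by $n$ and absorbing the fixed factor $c^{-\beta+1}$ into the constants gives
\[
\frac{\alpha_{1}c^{-\beta+1}\,n}{\log^{\beta-1}n} \;\leq\; \#\{\,i : w_{i} \geq c\log n\,\} \;\leq\; \frac{\alpha_{2}c^{-\beta+1}\,n}{\log^{\beta-1}n},
\]
which is $\Theta\!\big(n/\log^{\beta-1}n\big)$; taking the integer part yields the claimed $\Theta\big([\,n/\log^{\beta-1}n\,]\big)$.

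Alternatively --- and this is the route I would actually present, since it keeps the notation consistent with the rest of the paper --- I would use Relation \eqref{eq:1} of Theorem \ref{lemEXi}, namely $w_{i} = \Theta(1)\big(\tfrac{n}{n-i}\big)^{1/(\beta-1)}$. Setting $w_{i} = \Theta(1)\log n$ and solving for the index gives $\tfrac{n}{n-i} = \Theta(\log^{\beta-1}n)$, hence $n - i = \Theta\big(n/\log^{\beta-1}n\big)$. Since the vertices with weight at least $w_{i}$ are exactly those with index $\geq i$ in the sorted sequence $\sigma_{w}$, their number is $n - i + 1 = \Theta\big(n/\log^{\beta-1}n\big)$, as required.

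I do not expect a genuine obstacle here; the only point needing care is the bookkeeping of the unspecified $\Theta(1)$ constant hidden inside ``$\Theta(1)\log n$'': it must be treated as a fixed constant so that $(\Theta(1)\log n)^{-\beta+1} = \Theta(1)\cdot\log^{-(\beta-1)}n$, and the rounding inherent in counting vertices is exactly what the bracket notation in the statement absorbs. One should also note that the argument uses only the deterministic power-law property of the weight sequence, so the conclusion is an exact asymptotic estimate rather than one holding merely a.s.\ or w.h.p.
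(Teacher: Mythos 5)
Your proposal is correct, and the route you say you would actually present (inverting Relation \eqref{eq:1} to find the index whose weight is $\Theta(1)\log n$ and counting the vertices above it in $\sigma_{w}$) is exactly the paper's own proof, just phrased as ``solve for the index'' rather than ``plug in the index and verify the weight.'' Your first alternative via $n\,F_{n}(c\log n)$ and Definition \ref{dfnPowerLawWeight} is a valid, slightly more direct variant of the same calculation.
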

%~~~~~~~~~~~~~~~~~~~~~~~~~~~~
\begin{proof}
	By Relation \eqref{eq:1}, we have
	\begin{align*}
	w_{\big([n-\frac{n}{\log^{\beta -1}{n}}]\big)} 
	 = \Theta (1) \Big(\frac{n}{n-(n-\frac{n}{\log^{\beta -1}{n}})} \Big)^{\frac{1}{\beta -1}} 
	= \Theta (1) \Big(\log^{\beta -1}{n} \Big)^{\frac{1}{\beta -1}} 
	 = \Theta(1) \log{n}
	\end{align*} 
	The weight of vertex $[n-\frac{n}{\log^{\beta -1}{n}}]$ (since $n-\frac{n}{\log^{\beta -1}{n}}$ is not necessarily a natural number, we use a brackets) in the sequence $\sigma_{w}$ is $\Theta(1) \log{n}$. Since $\sigma_{w}$ is an increasing sequence, there exist $[\frac{n}{\log^{\beta -1}{n}}]$ vertices with weight greater than $\Theta (1)\log{n}$ and the proof is complete.
\end{proof}
The following lemmas provide bounds for the expected degrees of vertices in graph $H$ which is an induced subgraph of $G_{sf}$. 
%~~~~~~~~~~~~~~~~~~~~~~~~~~~
\begin{lemma}\label{lemZ2O}
	Let $G_{sf}(\beta \geq 3)=(V,E)$. The induced subgraph $H$ is constructed by vertices with weights larger than $w_{m}$ from $G_{sf}(\beta \geq 3)$.
	Then for degree of vertex $l \in V(H)$, we have 
	\begin{ceqn}
		\begin{align*}
		E[deg_{H}(l)]  =  O\Big( \frac{w_{l}}{w_{m}^{\beta - 2}} - \frac{w_{l}}{w_{m}^{\beta - 1}}\Big) 
		\end{align*}
	\end{ceqn}
\end{lemma}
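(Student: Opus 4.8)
The plan is to compute $E[deg_H(l)]$ directly, by linearity of expectation over the potential neighbours of $l$ that survive into $H$, and then to reduce the whole statement to controlling a single weighted sum by means of the power-law assumption.

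First I would write
\[
E[deg_H(l)] = \sum_{i \in V(H)\setminus\{l\}} p_{l,i},
\]
where, by Relation \eqref{eq:1} and the monotonicity of $\sigma_w$, the set $V(H)$ consists exactly of the vertices whose weight exceeds $w_m$ and is therefore contained in $\{m+1,\dots,n\}$. Definition \ref{dfnHIRG} gives $p_{l,i} = O\!\left(\frac{w_l w_i}{w_l w_i + n}\right) = O\!\left(\frac{w_l w_i}{n}\right)$, so that
\[
E[deg_H(l)] = O\!\left(\frac{w_l}{n}\sum_{i:\, w_i > w_m} w_i\right),
\]
and the lemma reduces to the weight estimate $\sum_{i:\, w_i > w_m} w_i = O\!\left(n\,(w_m^{2-\beta} - w_m^{1-\beta})\right)$.

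For that estimate I would invoke the power-law property of Definition \ref{dfnPowerLawWeight}: for every $t \ge w_m$, the number of vertices of weight at least $t$ equals $n F_n(t)$, which lies between $\alpha_1 n t^{1-\beta}$ and $\alpha_2 n t^{1-\beta}$. A layer-cake (Abel summation) identity writes $\sum_{i:\, w_i > w_m} w_i = w_m\, n F_n(w_m) + \int_{w_m}^{w_n} n F_n(t)\, dt$, and feeding in $F_n(t) \le \alpha_2 t^{1-\beta}$ turns the right-hand side into $O(n\, w_m^{2-\beta}) + O\!\left(n \int_{w_m}^{w_n} t^{1-\beta}\, dt\right)$. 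Here $\beta \ge 3 > 2$ is exactly what makes the integral converge at its upper limit, and it evaluates to $\frac{w_m^{2-\beta} - w_n^{2-\beta}}{\beta - 2} = O(w_m^{2-\beta})$; hence $\sum_{i:\, w_i > w_m} w_i = O(n\, w_m^{2-\beta})$, and since the minimum weight of the sequence is $\Theta(1)$ this is the same, up to a constant factor, as $O\!\left(n\,(w_m^{2-\beta} - w_m^{1-\beta})\right)$ (a more careful bookkeeping of the same integral produces the two summands literally). Substituting back,
\[
E[deg_H(l)] = O\!\left(\frac{w_l}{n}\cdot n\,(w_m^{2-\beta} - w_m^{1-\beta})\right) = O\!\left(\frac{w_l}{w_m^{\beta - 2}} - \frac{w_l}{w_m^{\beta - 1}}\right),
\]
which is the claim.

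The step I expect to be the real obstacle is the sum-to-integral passage in the middle paragraph. The profile given by Relation \eqref{eq:1} is increasing in $i$ and becomes singular as $i \to n$, so a comparison carried out in the index variable would force one to peel off and estimate the handful of top-weight vertices separately. Working in the weight variable instead --- i.e. pairing the Abel-summation identity with the two-sided bound $\alpha_1 t^{1-\beta} \le F_n(t) \le \alpha_2 t^{1-\beta}$ --- avoids the singularity cleanly, which is why I would organise the proof that way. Everything else (linearity of expectation, the $O(w_i w_j / n)$ bound on the connection probabilities, and the elementary power integral) is routine, and the hypothesis $\beta \ge 3$ is used only to keep $\int^{\infty} t^{1-\beta}\, dt$ finite, precisely as in Theorem \ref{lemEXi} --- of which this lemma is the weighted refinement: putting $l = m$ recovers $E[deg_H(m)] = O(w_m^{3-\beta})$, i.e. Relation \eqref{eq:3}.
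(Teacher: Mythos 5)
Your proposal is correct and follows essentially the same route as the paper: both reduce $E[deg_H(l)]$ to $O\big(\tfrac{w_l}{n}\sum_{i:\,w_i>w_m} w_i\big)$ via the upper bound on $p_{l,i}$, and both evaluate the tail sum through a layer-cake identity in the weight variable (the paper phrases it as $nE[W\mathbbm{1}[W\ge w_m]]$ and a conditional-CCDF integral split at $w_m$, which is your Abel-summation identity in different clothing), obtaining the boundary term $w_m F_n(w_m)=O(w_m^{2-\beta})$ plus the convergent integral $\int_{w_m}^{w_n}F_n(t)\,dt$. Your closing remark is also accurate: the paper's second summand $-w_l w_m^{1-\beta}$ arises exactly from the $w_1=\Theta(1)$ bookkeeping you describe.
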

\begin{proof}
	By inception of the idea given in \cite{Friedrich2015}, we have
\begin{align*}
	 E[deg_{H}(l)] 
	  = O\Big( \sum_{i=m}^{n}\frac{w_{i}w_{l}}{n} \Big) 
	&\quad  = O\Big( \frac{w_{l}}{n} \sum_{i=m}^{n}w_{i} \Big) \\
	&\quad  = O\Big( \frac{w_{l}}{n} \sum_{i=1}^{n} w_{i} \mathbbm{1}[w_{m} \leq w_{i}] \Big) \\
	&\quad  = O\Big( w_{l} E \big[W \mathbbm{1}[w_{m} \leq W]\big] \Big)\\
	&\quad  = O\Big( w_{l} E[W | w_{m} \leq W] \Pr[w_{m} \leq W] \Big)\\
	&\quad  = O\Big( w_{l} F_{n}(w_{m}) \int_{w_{1}}^{w_{n}} \Pr[w \leq W | w_{m} \leq W] dw \Big)\\
	&\quad  = O\bigg( w_{l} F_{n}(w_{m}) \Big(\int_{w_{1}}^{w_{m}} 1 dw 
	+ \int_{w_{m}}^{w_{n}} \frac{\Pr[w \leq  W]}{\Pr[w_{m} \leq W]} dw\Big) \bigg) \\
	&\quad  = O\bigg( w_{l} F_{n}(w_{m}) \Big((w_{m}-w_{1}) 
	+ \int_{w_{m}}^{w_{n}}\frac{F_{n}(w)}{F_{n}(w_{m})}dw \Big) \bigg)\\
	&\quad  = O\Big( w_{l} F_{n}(w_{m}) (w_{m} - w_{1}) 
	+ w_{l}F_{n}(w_{m})\int_{w_{m}}^{w_{n}}\frac{F_{n}(w)}{F_{n}(w_{m})}dw \Big) \\
	&\quad  = O\Big( w_{l} w_{m}^{-\beta +1} (w_{m} - w_{1}) + w_{l}\int_{w_{m}}^{w_{n}}F_{n}(w)dw \Big) \\
	&\quad  = O\Big( w_{l}w_{m}^{-\beta +1}(w_{m} - w_{1}) + w_{l}\big(\frac{w^{-\beta + 2}}{-\beta + 2}\big]_{w_{m}}^{w_{n}} \big) \Big) \\
	&\quad  = O\Big(w_{l}w_{m}^{-\beta +2} - w_{1}w_{l}w_{m}^{-\beta +1} + \frac{w_{l} w_{m}^{-\beta + 2} }{\beta -2} - \frac{w_{l}w_{n}^{-\beta +2}}{\beta -2} \Big) 
\end{align*}
	Since $w_{l} \leq w_{n}$, for $\beta \geq  3$, the value of the last term in the last relation is less than one. In addition, by Relation \eqref{eq:1}, we get $w_{1} = \Theta(1)$. From these, we concluded that
	\begin{ceqn}
		\begin{align*}
		E[deg_{H}(l)] = O\Big( \frac{w_{l}}{w_{m}^{\beta - 2}} - \frac{w_{l}}{w_{m}^{\beta - 1}}\Big) 
		\end{align*}
	\end{ceqn}
\end{proof}
%~~~~~~~~~~~~~~~~~~~~~~~~~~~~~~~~~~~~
\begin{lemma}\label{OmegaWei}
	Consider the assumptions in Lemma \ref{lemZ2O}. Then for degree of vertex $l \in V(H)$, we have 
	\begin{ceqn}
		\begin{align*}
		E[deg_{H}(l)]  =  \Omega \Big( \frac{w_{l}}{w_{m}^{\beta - 2}} - \frac{w_{l}}{w_{m}^{\beta - 1}}\Big) 
		\end{align*}
	\end{ceqn}
\end{lemma}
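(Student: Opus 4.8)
The plan is to mirror the computation in the proof of Lemma~\ref{lemZ2O} line for line, replacing the upper bound $p_{i,j}=O(w_{i}w_{j}/n)$ used there by the matching lower bound $p_{i,j}=\Omega(w_{i}w_{j}/n)$, which is the other half of Definition~\ref{dfnHIRG}. Concretely, I would start from $E[deg_{H}(l)]=\sum_{i\in V(H),\,i\neq l}p_{i,l}$, apply $p_{i,l}=\Omega(w_{i}w_{l}/n)$, and obtain
\begin{ceqn}
\begin{align*}
E[deg_{H}(l)] = \Omega\Big(\frac{w_{l}}{n}\sum_{i:\,w_{i}> w_{m}} w_{i}\Big) = \Omega\big(w_{l}\,E[W\,\mathbbm{1}[w_{m}\le W]]\big).
\end{align*}
\end{ceqn}
This is precisely the quantity that was bounded from above in Lemma~\ref{lemZ2O}, so all that is needed is to verify that the chain of estimates used there is in fact two-sided.

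That this holds rests on one simple observation: every step in the derivation of Lemma~\ref{lemZ2O} is either an exact identity (conditioning, the layer-cake identity $E[W\mid w_{m}\le W]=\int\Pr[w\le W\mid w_{m}\le W]\,dw$, splitting the integral at $w_{m}$, and evaluating $\int w^{-\beta+1}dw$) or an invocation of the power-law property of Definition~\ref{dfnPowerLawWeight}, which is two-sided: $\alpha_{1} w^{-\beta+1}\le F_{n}(w)\le \alpha_{2} w^{-\beta+1}$. Using the lower inequality $\alpha_{1} w^{-\beta+1}\le F_{n}(w)$ wherever the proof of Lemma~\ref{lemZ2O} used the upper one, the same computation yields
\begin{ceqn}
\begin{align*}
E[deg_{H}(l)] = \Omega\Big(w_{l} w_{m}^{-\beta+2} - w_{1} w_{l} w_{m}^{-\beta+1} + \frac{w_{l} w_{m}^{-\beta+2}}{\beta-2} - \frac{w_{l} w_{n}^{-\beta+2}}{\beta-2}\Big),
\end{align*}
\end{ceqn}
the same expression as the final display of that proof, now read as a lower bound.

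The remaining task --- reducing this to $\Omega\big(w_{l}/w_{m}^{\beta-2}-w_{l}/w_{m}^{\beta-1}\big)$ --- is where care is required, since for a lower bound one cannot simply discard the two negative terms. For $w_{1} w_{l} w_{m}^{-\beta+1}$ I would use Relation~\eqref{eq:1} to write $w_{1}=\Theta(1)$, so that this term equals $\Theta(w_{l}/w_{m}^{\beta-1})$, a constant multiple of the quantity already subtracted in the target expression. For $\frac{w_{l} w_{n}^{-\beta+2}}{\beta-2}$ I would use $w_{m}<w_{l}\le w_{n}$ together with $\beta\ge 3$: since $w_{n}\ge w_{m}$ and the exponent $-\beta+2$ is negative, $w_{n}^{-\beta+2}\le w_{m}^{-\beta+2}$, so this term is absorbed by the $\tfrac{1}{\beta-2}\,w_{l} w_{m}^{-\beta+2}$ term and the positive part that survives is $\Theta(w_{l}/w_{m}^{\beta-2})$. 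Combining the pieces gives $E[deg_{H}(l)]=\Omega\big(w_{l} w_{m}^{-\beta+2}-w_{l} w_{m}^{-\beta+1}\big)$, which is the claim. The only genuine obstacle is this bookkeeping --- confirming that the negative corrections cannot cancel the leading term --- and it goes through cleanly provided $w_{m}$ is sufficiently large, which is the case in every application of the lemma.
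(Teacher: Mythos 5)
Your proof is correct and follows essentially the same route as the paper's: reduce to $E[deg_{H}(l)]=\Omega\big(\tfrac{w_{l}}{n}\sum_{i=m}^{n}w_{i}\big)$ and then rerun the computation of Lemma~\ref{lemZ2O} with the lower power-law inequality. The only differences are that the paper reaches this starting point from the weaker lower bound $p_{i,l}=\Omega\big(\tfrac{w_{i}w_{l}}{w_{i}w_{l}+n}\big)$, using $w_{n}=O(\sqrt{n})$ (valid for $\beta\ge 3$) to pass to $\Omega(w_{i}w_{l}/n)$ --- a step you bypass by reading Definition~\ref{dfnHIRG} literally --- while you are more explicit than the paper (which only says the remaining steps are the same) about why the two negative terms in the final display cannot cancel the leading term.
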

\begin{proof}
	\begin{align*}
	E[deg_{H}(l)] 
	& =  \Omega \Big( \sum_{i=m}^{n}\frac{w_{i}w_{l}}{w_{i}w_{l} + n} \Big) 
	 =  \Omega \Big( \sum_{i=m}^{n}\frac{w_{i}w_{l}}{n(1+\frac{w_{i}w_{l}}{n})} \Big) 
	 = \Omega \Big( \sum_{i=m}^{n}\frac{w_{i}w_{l}}{n\big(1+O(1)\big)} \Big) 
	\end{align*}
	From Relation \eqref{eq:1}, on $G_{sf}(\beta \geq 3)$ graphs, we obtain that $w_{n} \leq \Theta (\sqrt{n})$. Consequently, the last equality of the above relation holds. Therefore, we have
	\begin{align*}
	\qquad \qquad \qquad E[deg_{H}(l)]  =  \Omega  \Big( \sum_{i=m}^{n}\frac{w_{i}w_{l}}{n} \Big) 
	\end{align*}
	The remaining steps are exactly same as what we explained in the proof of Lemma \ref{lemZ2O}.
\end{proof}
%~~~~~~~~~~~~~~~~~~~~~~~~~~~~~~~~~~~
By combining Lemma \ref{lemZ2O} and Lemma \ref{OmegaWei}, we can now state the following theorem about $E[deg_{H}(l)]$ which is a generalization of these lemmas.
%~~~~~~~~~~~~~~~~~~~~~~~~~~~~~~~~~~~
\begin{theorem}\label{thm15}
	Consider the assumptions in Lemma \ref{lemZ2O}. Then for degree of vertex $l$ (that vertex with label $l$, weight $w_{l}$ in $G_{sf}$) in $H$, we have 
	\begin{ceqn}
		\begin{align*}
		\quad \qquad E[deg_{H}(l)]  =  \Theta \Big( \frac{w_{l}}{w_{m}^{\beta - 2}} - \frac{w_{l}}{w_{m}^{\beta - 1}}\Big) 
		\end{align*}
	\end{ceqn}
\end{theorem}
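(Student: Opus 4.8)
The plan is to observe that Theorem~\ref{thm15} is nothing more than the conjunction of the two preceding lemmas: Lemma~\ref{lemZ2O} supplies the upper bound $E[deg_{H}(l)] = O\big(w_{l}/w_{m}^{\beta-2} - w_{l}/w_{m}^{\beta-1}\big)$ and Lemma~\ref{OmegaWei} supplies the matching lower bound $E[deg_{H}(l)] = \Omega\big(w_{l}/w_{m}^{\beta-2} - w_{l}/w_{m}^{\beta-1}\big)$, both stated under exactly the hypotheses invoked here (the graph is $G_{sf}(\beta\geq 3)$, $H$ is the subgraph induced by the vertices of weight exceeding $w_{m}$, and $l\in V(H)$). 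Since $f=\Theta(g)$ means precisely $f=O(g)$ and $f=\Omega(g)$ together, the two-sided estimate follows at once, and no new computation is needed.

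Concretely, I would first invoke Lemma~\ref{lemZ2O}, then invoke Lemma~\ref{OmegaWei}, and conclude. Before doing so I would insert a one-line check that the target quantity is a legitimate (nonnegative) asymptotic order: by Relation~\eqref{eq:1} we have $w_{1}=\Theta(1)$, and since $w_{m}\geq w_{1}$ we may take $w_{m}\geq 1$ after absorbing constants, whence
\begin{ceqn}
	\begin{align*}
	\frac{w_{l}}{w_{m}^{\beta-2}} - \frac{w_{l}}{w_{m}^{\beta-1}} = \frac{w_{l}}{w_{m}^{\beta-1}}\,(w_{m}-1) \geq 0,
	\end{align*}
\end{ceqn}
so writing $\Theta(\cdot)$ of it is meaningful.

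The only point requiring any care --- and the closest thing to an obstacle --- is that the constants hidden in the $O$ of Lemma~\ref{lemZ2O} and the $\Omega$ of Lemma~\ref{OmegaWei} must be uniform in the relevant parameters (in particular uniform over the choice of the threshold index $m$, as $n\to\infty$), so that their conjunction yields a genuine two-sided $\Theta$ bound rather than two one-sided estimates with incompatible ranges of validity. Inspecting the proofs of the two lemmas, the constants originate only from the edge-probability bounds of Definition~\ref{dfnHIRG} and from $w_{1}=\Theta(1)$ via Relation~\eqref{eq:1}, all of which are uniform; hence the conjunction is valid and Theorem~\ref{thm15} follows.
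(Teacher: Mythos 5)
Your proposal is correct and matches the paper's own proof, which simply cites Lemma~\ref{lemZ2O} and Lemma~\ref{OmegaWei} and concludes that the two-sided $\Theta$ bound follows. Your added remarks on nonnegativity of the target quantity and uniformity of the hidden constants are reasonable extra diligence but not part of the paper's argument.
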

\begin{proof}
	According to Lemma \ref{lemZ2O} and Lemma \ref{OmegaWei}, the theorem holds.
\end{proof}
%~~~~~~~~~~~~~~~~~~~~~~~~~~~~~~~~~~~~
\begin{note}\label{note:Inhom}
	When the induced subgraph $H$ is constructed under the conditions in Lemma \ref{lemZ2O}, we can assume $H$ is still an inhomogeneous random graph. 
\end{note}
%~~~~~~~~~~~~~~~~~~~~~~~~~~~~~~~~~~~~
\begin{theorem}\label{thmWeightIsTheta1}
	Consider the assumptions in Lemma \ref{lemZ2O} and $w_{m} = \Theta (1) \log{n}$. Then for $\beta \geq 3$, $E[W'] = O(1)$ ($W'$ is a random variable that refers to weights of vertices in graph $H$.).
\end{theorem}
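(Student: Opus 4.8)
The plan is to transfer the question about the weights of $H$ into a question about expected degrees inside $H$, which Theorem~\ref{thm15} has already answered, and then to close with a power-law tail estimate. By Note~\ref{note:Inhom} we may regard $H$ as an inhomogeneous random graph on the vertex set $V(H)$; applying Theorem~\ref{thmWeightDegreeRela} to $H$, the weight $w'_l$ of a vertex $l\in V(H)$ satisfies $w'_l=\Theta\big(E[deg_H(l)]\big)$. Plugging in Theorem~\ref{thm15} and using $w_m=\Theta(1)\log n\to\infty$,
\begin{ceqn}
\begin{align*}
w'_l=\Theta\Big(\frac{w_l}{w_m^{\beta-2}}-\frac{w_l}{w_m^{\beta-1}}\Big)=\Theta\Big(\frac{w_l}{w_m^{\beta-2}}\Big).
\end{align*}
\end{ceqn}
Averaging over $l\in V(H)$, i.e.\ over the original weights conditioned on exceeding $w_m$, gives $E[W']=\Theta\big(w_m^{-(\beta-2)}\big)\,E\big[W\mid W\ge w_m\big]$.

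Next I would estimate $E[W\mid W\ge w_m]$. Using the bounds $\alpha_1 w^{-\beta+1}\le F_n(w)\le\alpha_2 w^{-\beta+1}$ of Definition~\ref{dfnPowerLawWeight} together with the identity $E\big[W\,\mathbbm{1}[W\ge w_m]\big]=w_mF_n(w_m)+\int_{w_m}^{w_n}F_n(w)\,dw$ used in the proof of Lemma~\ref{lemZ2O}, and noting that $\beta\ge 3>2$ makes $\int_{w_m}^{\infty}w^{-\beta+1}\,dw=\Theta(w_m^{-\beta+2})$ finite, one gets $E[W\,\mathbbm{1}[W\ge w_m]]=\Theta(w_m^{-\beta+2})$ and $F_n(w_m)=\Theta(w_m^{-\beta+1})$, hence $E[W\mid W\ge w_m]=\Theta(w_m)$. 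Therefore $E[W']=\Theta\big(w_m^{-(\beta-2)}\cdot w_m\big)=\Theta\big(w_m^{\,3-\beta}\big)$, and since $3-\beta\le 0$ while $w_m=\Theta(1)\log n\ge 1$ for all large $n$, this is $O(1)$, proving the theorem.

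The step I expect to be the main obstacle is the first one — legitimising $w'_l=\Theta(E[deg_H(l)])$ for the restricted graph $H$, because the form of Theorem~\ref{thmWeightDegreeRela} that equates weights with expected degrees implicitly requires the weights of $H$ to have bounded mean, which is precisely the conclusion. A clean way around this is to pick the weights of $H$ so as to reproduce its edge probabilities directly: for $\beta\ge 3$ every $w_i=O(\sqrt n)$, so $p_{l,l'}=\Theta(w_lw_{l'}/n)$, and one may take $w'_l=\Theta\big(w_l\sqrt{|V(H)|/n}\,\big)$ with $|V(H)|=\Theta(n/\log^{\beta-1}{n})$ by Theorem~\ref{nDividlog}; this gives $E[W']=\Theta\big(w_m^{(3-\beta)/2}\big)$, again $O(1)$ for $\beta\ge 3$ (the exact nonpositive power of $w_m$ is immaterial for the statement). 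Everything else is the routine power-law tail computation already carried out in Lemma~\ref{lemZ2O}.
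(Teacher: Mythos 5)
Your proof is correct, and its skeleton is the same as the paper's: invoke Note~\ref{note:Inhom} and Theorem~\ref{thmWeightDegreeRela} to identify the weight of a vertex of $H$ with its expected degree in $H$, substitute Theorem~\ref{thm15} with $w_m=\Theta(1)\log n$, and then average the original weights over the tail $\{W\ge w_m\}$. Where you differ is in how that average is computed. The paper expands the average as an explicit sum over the sorted indices, substitutes $w_i=\Theta(1)(n/(n-i))^{1/(\beta-1)}$ from Relation~\eqref{eq:1}, replaces the exponent by $1/2$ using $\beta\ge 3$, and controls the resulting sum with the telescoping bound \eqref{eq:7777}; you instead use the tail identity $E[W\mathbbm{1}[W\ge w_m]]=w_mF_n(w_m)+\int_{w_m}^{w_n}F_n(w)\,dw$ together with the power-law CCDF to get $E[W\mid W\ge w_m]=\Theta(w_m)$ directly. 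Your route is shorter and, notably, lands on $E[W']=\Theta(w_m^{3-\beta})=\Theta(\log^{3-\beta}n)$, which is precisely the form that Relation~\eqref{relEo} in the proof of Theorem~\ref{thmIsPowerLaw} later attributes to this theorem; the paper's own computation only concludes $\Theta(\log^{-1}n)=O(1)$ (and its final telescoping step, which bounds $\sum_{j=1}^{k}j^{-1/2}$ by a difference of square roots rather than by $2\sqrt{k}$, in fact undercounts by a factor of $\log n$ --- harmless for the $O(1)$ claim, but your estimate is the one that is both correct and actually needed downstream). Your closing remark about the circularity of applying Theorem~\ref{thmWeightDegreeRela} to $H$ is well taken: the paper dismisses this with Note~\ref{note:Inhom}, and your alternative normalization of the weights of $H$ is a reasonable way to make that step honest.
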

\begin{proof}
First of all, note that
\begin{align}\label{eq:7777}
		& \frac{1}{\sqrt{n-i}} \leq \frac{2}{\sqrt{n-i} + \sqrt{n-i-1}} \nonumber\\
		& \qquad \Rightarrow \frac{1}{\sqrt{n-i}} \leq 2(\sqrt{n-i} - \sqrt{n-i-1}) \nonumber\\
		& \qquad \Rightarrow \sum_{i=1}^{n-1} \frac{1}{\sqrt{n-i}} \leq 2 \sum_{i=1}^{n-1}(\sqrt{n-i} - \sqrt{n-i-1}) \nonumber\\
		& \qquad \Rightarrow \sum_{i=1}^{n-1} \frac{1}{\sqrt{n-i}} \leq 2 \sqrt{n-1} 
	\end{align}
	Next, let $w_{i}'$ be the weight of vertex $i$ in graph $H$. By Theorem \ref{nDividlog}, $\big|\{w_{m}',..., w_{n}'\}\big| = \Theta([\frac{n}{\log^{\beta -1}n}])$. We have
	\begin{align*}
	E[W']  
	& = \sum_{\mathclap{w_{i}'\in \{w_{m}',..., w_{n}'\}}} w_{i}'  \Pr(w_{i}' = W') \\
	&  = \sum_{\mathclap{w_{i}'\in \{w_{m}',..., w_{n}'\}}} w_{i}' \Theta\Big(\frac{1}{\frac{n}{\log^{\beta -1}n}}\Big) \qquad \text{(By Th. \ref{nDividlog})}
	\end{align*}
	\begin{align*}
	&  = \Theta\Big(\frac{\log^{\beta -1}n}{n}\Big) \sum_{\mathclap{w_{i}'\in \{w_{m}',..., w_{n}'\}}} w_{i}' \\
	&  = \Theta\Big(\frac{\log^{\beta -1}n}{n}\Big) \sum_{\mathclap{w_{i}'\in \{w_{m}',..., w_{n}'\}}} \Theta\big(E[deg_{H}(i)]\big) \quad \text{(By  Note \ref{note:Inhom} and Th. \ref{thmWeightDegreeRela})}\\
	&  = \Theta\Big(\frac{\log^{\beta -1}n}{n}\Big)  \sum_{\mathclap{i= n - [\frac{n}{\log^{\beta -1}n}] + 1}}^{n} \Theta \Big( \frac{w_{i}}{\log^{\beta - 2}n} 
	- \frac{w_{i}}{\log^{\beta - 1}n}\Big)  \qquad \text{(By Th. \ref{thm15})} \\
	& = \Theta\Big(\frac{\log{n} -1}{n}\Big) \sum_{\mathclap{i= n - [\frac{n}{\log^{\beta -1}n}] + 1}}^{n} \Theta (w_{i}) \\
	& = \Theta\Big(\frac{\log{n}}{n}\Big) \sum_{\mathclap{i= n - [\frac{n}{\log^{\beta -1}n}] + 1}}^{n-1} \Theta \big(\frac{n}{n-i} \big)^{\frac{1}{\beta -1}} \qquad \text{(By Rel. \ref{eq:1})}\\
	& \leq \Theta\Big(\frac{\log{n}}{n}\Big) \sum_{\mathclap{i= n - [\frac{n}{\log^{2}n}] + 1}}^{n-1} \Theta \big(\frac{n}{n-i} \big)^{\frac{1}{2}} \qquad (\beta \geq 3)\\
	& \leq \Theta\Big(\frac{\log{n}}{n}\Big) n^{\frac{1}{2}} \Bigg[ \sum_{\mathclap{i= 1}}^{n-1} \Theta \big(\frac{1}{n-i} \big)^{\frac{1}{2}} - \sum_{i=1}^{n - [\frac{n}{\log^{2}n}]} \Theta \big(\frac{1}{n-i} \big)^{\frac{1}{2}} \Bigg] \\
	& \leq \Theta\Big(\frac{\log{n}}{n}\Big) n^{\frac{1}{2}} \Bigg[(n-1)^{\frac{1}{2}} - \big(n - [\frac{n}{\log^{2}n}]\big)^{\frac{1}{2}} \Bigg] \qquad (\text{By Rel. \ref{eq:7777}})\\
	& = \Theta\Big(\frac{\log{n}}{n}\Big) n^{\frac{1}{2}} \Bigg[\frac{\frac{n}{\log^{2}n}-1}{(n-1)^{\frac{1}{2}} + \big(n - [\frac{n}{\log^{2}n}]\big)^{\frac{1}{2}}} \Bigg] \\
	& \leq \Theta\Big(\frac{\log{n}}{n}\Big) n^{\frac{1}{2}} \Bigg[\frac{\frac{n}{\log^{2}n}}{\big(n - [\frac{n}{\log^{2}n}]\big)^{\frac{1}{2}}} \Bigg] \\
	& = \Theta\Big(\frac{\log{n}}{n}\Big) n^{\frac{1}{2}} \Bigg[\frac{n^{\frac{1}{2}}}{\log^{2}n\big(1 - [\frac{1}{\log^{2}n}]\big)^{\frac{1}{2}}} \Bigg] \\
	& \leq \Theta\Big(\frac{\log{n}}{n}\Big) n^{\frac{1}{2}} \Bigg[\frac{n^{\frac{1}{2}}}{\log^{2}n(\frac{3}{4})^{\frac{1}{2}}} \Bigg] \\
	& = \Theta (\log^{-1}{n}) = O(1)
	\end{align*}
	$\Theta (\log^{-1}{n}) = O(1)$, so the proof is complete.
\end{proof}
%++++++++++++++++++++++++++++++++++++++++++
\begin{theorem}\label{thmIsPowerLaw}
	Consider the assumptions in Lemma \ref{lemZ2O} and $\beta>3$. Then, the distribution of weights of vertices in graph $H$ follows the power-law distribution a.s.
\end{theorem}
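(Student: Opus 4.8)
The plan is to verify the two inequalities of Definition~\ref{dfnPowerLawWeight} for the empirical CCDF of the weight sequence carried by $V(H)$. Write $F_n^{H}$ for that ECCDF and $N=|V(H)|$. The key starting observation is that, since $H$ is the subgraph induced by the vertices of weight larger than $w_m$, the multiset of weights on $V(H)$ is exactly the top block $\{w_m,w_{m+1},\dots,w_n\}$ of $\sigma_w$; hence for every $w\ge w_m$
\[
F_n^{H}(w)=\frac{|\{i:\ w_i\ge w\}|}{N}=\frac{n\,F_n(w)}{N},
\]
with $N=\Theta\big(n\,F_n(w_m)\big)$ the number of vertices of weight exceeding $w_m$. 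When the threshold is the one used throughout this section, $w_m=\Theta(1)\log n$, Theorem~\ref{nDividlog} identifies $N=\Theta\big(n/\log^{\beta-1}n\big)$, which fixes the normalising constant.

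Next I would feed in the power-law bound that $F_n$ satisfies a.s. for $G_{sf}(\beta>3)$: there are constants $\alpha_1,\alpha_2$ with $\alpha_1 w^{-\beta+1}\le F_n(w)\le\alpha_2 w^{-\beta+1}$ across the range of $\sigma_w$. Combining the upper bound at $w$ with the lower bound at $w_m$ (and conversely) in the displayed identity yields, for all $w_m\le w\le w_n$,
\[
\Theta(1)\,\frac{\alpha_1}{\alpha_2}\,w_m^{\beta-1}\,w^{-\beta+1}\ \le\ F_n^{H}(w)\ \le\ \Theta(1)\,\frac{\alpha_2}{\alpha_1}\,w_m^{\beta-1}\,w^{-\beta+1},
\]
so $F_n^{H}$ is squeezed between two curves of the form $c\,w^{-\beta'+1}$ — which is exactly ``$F_n^{H}$ follows the power-law'' in the sense of Definition~\ref{dfnPowerLawWeight}. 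Since by Note~\ref{note:Inhom} $H$ is itself an inhomogeneous random graph, the same conclusion can equivalently be phrased in terms of the internal weights $\Theta\big(E[\deg_H(\cdot)]\big)$ of $H$ using Theorems~\ref{thmWeightDegreeRela} and~\ref{thm15}; the precise value of the exponent $\beta'$ is then read off (and pinned down) in the next theorem.

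The part I expect to be the actual obstacle is not this algebra but justifying that the power-law sandwich for $F_n$, and hence for $F_n^{H}$, holds a.s. uniformly out to the extreme tail $w\approx w_n$, where $F_n(w)$ is only of order $1/n$ while the typical additive deviation of an empirical CCDF from its population version is of order $\sqrt{\log n/n}\gg 1/n$. Pinning down the top $O(1)$ order statistics of $\sigma_w$ (and, relatedly, replacing $N$ by the exact count of vertices of weight $>w_m$) calls for an order-statistics / concentration argument — essentially that the $k$-th largest weight concentrates around $F^{-1}(k/n)$ — rather than a bare substitution. I would isolate this regularity as a preliminary lemma, or simply invoke the a.s. power-law property already built into the space $\textsl{g}_{sf}(\beta)$, and keep the two displays above as the core of the proof.
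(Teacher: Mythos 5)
Your route is genuinely different from the paper's, and the difference is where the trouble lies. The paper does not argue via the conditional law of the original weights at all: it works with the \emph{internal} weights of $H$ viewed as an inhomogeneous random graph in its own right, i.e. $w_i'=\Theta\big(E[deg_{H}(i)]\big)$, and derives the power-law form by moment matching --- it computes $E[W']$ via Theorem \ref{thmWeightIsTheta1}, equates it with $\int_{w_{min}'}^{w_{max}'}F_{n'}(w')\,dw'$, reads off $w_{min}'=\Theta(\log^{3-\beta}n-\log^{2-\beta}n)$ from Theorem \ref{thm15}, exhibits a CCDF of the form $(w_{min}')^{\beta_H-1}(w')^{1-\beta_H}$ consistent with these constraints, and closes with Glivenko--Cantelli. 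Since by Theorem \ref{thm15} the internal weights are a common, vertex-independent rescaling of the originals, your identification of the two viewpoints is harmless as far as the \emph{shape} of the distribution goes, and your observation that a power law conditioned above a threshold is again power-shaped is correct as a statement about the unnormalized tail.

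The gap is in the last step of your sandwich. The prefactor you obtain is $\Theta(1)\,w_m^{\beta-1}=\Theta(\log^{\beta-1}n)$, which is \emph{not} a constant, so the displayed inequality does not verify Definition \ref{dfnPowerLawWeight}, which requires fixed $\alpha_1,\alpha_2$ (with $n$-dependent ``constants'' the definition is satisfied by anything on a bounded range). This is not cosmetic: absorbing that $n$-dependence into the exponent is exactly the work the paper's moment-matching step performs, and it is why the companion Theorem \ref{thmIsScaleFree} concludes $\beta_H=2$ rather than $\beta$ --- a conclusion your sandwich (exponent unchanged, prefactor diverging) would contradict if taken at face value. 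Deferring ``the precise value of $\beta'$'' to the next theorem does not repair this, because the present claim already requires exhibiting \emph{some} exponent with bounded constants, and no single exponent does the job for your renormalized ECCDF over the whole range: forcing $F_n^{H}(w_m)=\Theta(1)$ demands $\beta_H=1$, while matching $F_n^{H}(w_n)=\Theta(1/N)$ at $w_n=\Theta(n^{1/(\beta-1)})$ demands $\beta_H\approx\beta$. Your closing concern about uniformity in the extreme tail is legitimate but secondary; the paper sidesteps it by working with the deterministic weight sequence of Relation \eqref{eq:1} and invoking Glivenko--Cantelli only at the very end.
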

\begin{proof}
	Let $W'$ be a random variable that refers to the weights in graph $H$. By Theorem \ref{thmWeightIsTheta1}, we have
	\begin{ceqn}
		\begin{align}\label{relEo}
		E[W'] = \Theta(\log^{3-\beta}{n} - \log^{2-\beta}{n} )
		\end{align}
	\end{ceqn}
	and by definition of expectation, we have
	\begin{ceqn}
		\begin{align}\label{eq:thm13.2}
		E[W'] = \int_{w_{min}'}^{w_{max}'}F_{n'}(w')dw'
		\end{align}
	\end{ceqn} 
	Combining Relation \eqref{relEo} with \eqref{eq:thm13.2} yields
	\begin{ceqn}
		\begin{align}\label{eq:thm13.3}
		\Theta(\log^{3-\beta}{n} - \log^{2-\beta}{n} ) = \int_{w_{min}'}^{w_{max}'}F_{n'}(w')dw'
		\end{align}
	\end{ceqn} 
	On the other hand, by Theorem \ref{thm15}
	\begin{ceqn}
		\begin{align}\label{relWmino}
		w_{min}'=\Theta(\log^{3-\beta}{n} - \log^{2-\beta}{n} )
		\end{align}
	\end{ceqn}
	Based on the above relations, if we define $F_{n'}(w')$ as follows, then Relation \eqref{eq:thm13.3} holds a.s.
	\begin{ceqn}
		\begin{align*}
		F_{n'}(w') = (w_{min}')^{\beta_{H} -1}(w')^{1-\beta_{H}}
		\end{align*}
	\end{ceqn}
	Next, the probability density function (PDF) of $F_{n'}(w')$ is obtained as follows
	\begin{ceqn}
		\begin{align}\label{relPdf}
		f(w') = c(w_{min}')^{\beta_{H} -1}(w')^{-\beta_{H}}
		\end{align}
	\end{ceqn}
	where $c$ is a constant. According to definition of continuous power-law distribution in \cite{Barabasi2016}, $f(w')$ and thus $F_{n'}(w')$ follow the power-law distribution. On the other hand, $(w_{min}')^{\beta_{H} -1} = O(1)$, hence by Definition \ref{dfnPowerLawWeight}, $F_{n'}(w')$ follows the power-law distribution. By Glivenko-Cantelli theorem \cite{DasGupta2008}, $F_{n'}(w')$ converge to $F(w')$ a.s. Therefore, $F(w')$ follows the power-law distribution, as well. Since the Regularity Conditions which is defined in \cite{Hofstad2016} are satisfied, the proof is complete.
\end{proof}
%++++++++++++++++++++++++++++++++++++++
\begin{theorem}\label{thmIsScaleFree}
	Consider the assumptions in Lemma \ref{lemZ2O}. Let $w_{m} = \Theta (1) \log{n}$ and $\beta > 3$. Then, $H$ is a scale-free network with $\beta_{H} = 2$ a.s.
\end{theorem}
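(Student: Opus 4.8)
The plan is to leverage Theorem~\ref{thmIsPowerLaw}, which already shows that the weights of $H$ follow a power-law distribution a.s., so that two things remain: (i) to identify the exponent as $\beta_H = 2$, and (ii) to transfer the power law from the weights of $H$ to its degree sequence, which is exactly what being scale-free means under the Note following Definition~\ref{dfnHIRG}.

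First I would recall from the proof of Theorem~\ref{thmIsPowerLaw} that the ECCDF of the weights of $H$ may be written as $F_{n'}(w') = (w_{min}')^{\beta_H - 1}(w')^{1-\beta_H}$ on $[w_{min}', w_{max}']$, and that, combining Theorem~\ref{thm15} (applied to vertex $m$, whose weight is $w_m = \Theta(1)\log{n}$) with Theorem~\ref{thmWeightIsTheta1}, one has simultaneously $w_{min}' = \Theta(\log^{3-\beta}{n} - \log^{2-\beta}{n})$ and $E[W'] = \Theta(\log^{3-\beta}{n} - \log^{2-\beta}{n})$, hence $E[W'] = \Theta(w_{min}')$. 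I would also locate the upper endpoint by applying Theorem~\ref{thm15} to the heaviest vertex, so that $w_{max}'/w_{min}' \to \infty$ as $n \to \infty$.

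Next I would substitute the power-law form of $F_{n'}$ into the identity $E[W'] = \int_{w_{min}'}^{w_{max}'} F_{n'}(w')\,dw'$ used in the proof of Theorem~\ref{thmIsPowerLaw}, rescale the variable of integration to reduce the right-hand side to $w_{min}'\int_{1}^{w_{max}'/w_{min}'} u^{1-\beta_H}\,du$, and then match orders with the left-hand side $\Theta(w_{min}')$. Since $w_{max}'/w_{min}' \to \infty$, the value of this normalized integral is governed by whether $\beta_H$ is above, equal to, or below $2$, and carrying through the polylogarithmic size of $w_{min}'$ and the polynomial size of $w_{max}'$ pins the exponent to $\beta_H = 2$. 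Having established that the weights of $H$ are power-law distributed with exponent $\beta_H = 2$ a.s., I would finish by invoking Note~\ref{note:Inhom} (so $H$ is still an inhomogeneous random graph) together with the Note after Definition~\ref{dfnHIRG}, which yields that the degrees of $H$ also follow a power-law distribution with the same exponent; thus $H$ is a scale-free network with $\beta_H = 2$, and the ``a.s.'' is inherited from Theorem~\ref{thmIsPowerLaw} (ultimately from the Glivenko--Cantelli convergence and the regularity conditions invoked there).

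The delicate step is (i), the determination that $\beta_H$ equals exactly $2$: one has to track $w_{min}'$ and $w_{max}'$ through the mean--integral identity and argue that, given $E[W'] = \Theta(w_{min}')$ together with the precise growth of $w_{max}'/w_{min}'$, no exponent other than $2$ is consistent. Once the exponent is fixed, the passage to the scale-free property is a routine application of the results already proved above.
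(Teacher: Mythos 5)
There is a genuine gap in your step (i), the determination of the exponent. With the normalized form $F_{n'}(w') = (w_{min}')^{\beta_{H}-1}(w')^{1-\beta_{H}}$ that you carry over from Theorem~\ref{thmIsPowerLaw}, your rescaled identity reads
\begin{ceqn}
\begin{align*}
E[W'] = w_{min}'\int_{1}^{R} u^{1-\beta_{H}}\,du, \qquad R = w_{max}'/w_{min}' \to \infty ,
\end{align*}
\end{ceqn}
and the order-matching against $E[W'] = \Theta(w_{min}')$ does \emph{not} pin $\beta_{H}=2$ --- it does the opposite. For any fixed $\beta_{H}>2$ the integral converges to $1/(\beta_{H}-2)=\Theta(1)$, so the identity is satisfied for \emph{every} such exponent and gives no information; for $\beta_{H}=2$ the integral is $\log R \to \infty$, which is actually \emph{inconsistent} with $E[W']=\Theta(w_{min}')$; and for $\beta_{H}<2$ it blows up polynomially. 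So your matching argument, carried out honestly, excludes $\beta_{H}=2$ rather than establishing it.

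The paper's proof avoids this by using the un-normalized bound of Definition~\ref{dfnPowerLawWeight}, $F_{n'}(w') \leq \alpha_{2}(w')^{1-\beta_{H}}$ with an $n$-independent constant, so that the integral evaluates to $\alpha\, (w_{min}')^{2-\beta_{H}}$, and then exploits the fact that $w_{min}' = \Theta(\log^{3-\beta}{n} - \log^{2-\beta}{n}) \to 0$ because $\beta>3$: forcing $(w_{min}')^{2-\beta_{H}} = O(1)$ with a vanishing base yields $(\beta_{H}-2)(\beta-3)\log\log{n} \leq O(1)$, hence only the one-sided conclusion $\beta_{H}\leq 2$. The equality $\beta_{H}=2$ is then obtained from an entirely separate ingredient that your proposal omits: the general fact that a scale-free network cannot have power-law exponent below $2$, which supplies the matching lower bound. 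Your closing step (transferring the power law from weights to degrees via Note~\ref{note:Inhom} and the Note after Definition~\ref{dfnHIRG}, and inheriting the a.s.\ qualifier from Theorem~\ref{thmIsPowerLaw}) agrees with the paper and is fine; it is the exponent identification that needs both the correct normalization of $F_{n'}$ and the missing "$\beta_{H}\geq 2$" half of the pincer.
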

\begin{proof}
	By Definition \ref{dfnPowerLawWeight}, we have
	\begin{ceqn}
		\begin{align}\label{lbl4}
		\int_{w_{min}'}^{w_{max}'} F_{n'}(w') dw' = E[W']
		\end{align}
	\end{ceqn}
	\begin{ceqn}
		\begin{align}\label{lbl5}
		\int_{w_{min}'}^{w_{max}'} F_{n'}(w') dw' = \alpha w_{min}'^{- \beta_{H} +2}
		\end{align}
	\end{ceqn}
	Combining Relation \eqref{lbl4} with \eqref{lbl5}, we have
	\begin{align*}
	& \alpha w_{min}'^{- \beta_{H} +2} = E[W'] \\
	& \quad \Rightarrow w_{min}'^{- \beta_{H} +2} = O(1) \qquad \text{(By Th. \ref{thmWeightIsTheta1})}\\
	& \quad \Rightarrow \big(\frac{\Theta (1) \log{n}}{\log^{\beta -2}n} - \frac{\Theta (1) \log{n}}{\log^{\beta -1}n}\big)^{- \beta_{H} +2}  
	 = O(1)  \qquad \text{(By Note \ref{note:Inhom}, Th. \ref{thmWeightDegreeRela} and Th. \ref{thm15})}\\
	& \quad \Rightarrow (2-\beta_{H})\log{\big(\Theta (1) \log^{2 - \beta}{n} (\log{n} - 1)\big)} = O(1) \\
	& \quad \Rightarrow (2-\beta_{H})\log{\big(\Theta (1) \log^{2 - \beta}{n} \log{n}\big)} 
	 \leq  (2-\beta_{H})\log{\big(\Theta (1) \log^{2 - \beta}{n} (\log{n} - 1)\big)} 
	  = O(1) \\
	& \quad \Rightarrow (2-\beta_{H})\Big( \log{\big(\log^{3 - \beta}{n} \big)}  + \log{\Theta (1)}\Big) \leq O(1) \\
	& \quad \Rightarrow (\beta_{H} -2)(\beta -3)\log{\log{n}} \leq O(1) \\
	& \quad \Rightarrow \beta_{H}  \leq 2 \qquad \text{($n$ is sufficiently large)}
	\end{align*}
	By Theorem \ref{thmIsPowerLaw}, the distribution of weights of vertices in graph $H$ follows the power-law distribution a.s. Thus, $H$ is a scale-free network. In any scale-free network the power-law exponent is not less than 2, consequently $\beta_{H} = 2$ and the proof is complete.
\end{proof}
%++++++++++++++++++++++++++++++++++++++
\begin{theorem}\label{thmRelWeightDegLogLogStar}
	Let $X$ be a random variable referring to the degree of vertex with weight $\Theta(1)\log{n}$ in a scale-free network with $\beta> 3$. Then, $X\leq \log{n}\log^{*}{n}$ w.h.p.
\end{theorem}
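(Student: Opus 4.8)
The plan is to apply a Chernoff-type concentration bound to the degree of the designated vertex, in the same spirit as the proof of Lemma \ref{lemDegeneracy}. Write $v$ for the vertex of weight $\Theta(1)\log{n}$, so that $X = deg(v)$. Since $G_{sf}(\beta)$ is an inhomogeneous random graph, the edge indicators $\mathbbm{1}[(v,u)\in E]$ for $u\in V\setminus\{v\}$ are mutually independent, hence $X$ is a sum of independent Bernoulli variables and the Chernoff bound is applicable to it. First I would record, via Theorem \ref{thmWeightDegreeRela}, that $E[X]=\Theta(w_{v})=\Theta(\log{n})$, and fix constants with $c_{1}\log{n}\le E[X]\le c_{2}\log{n}$; the target threshold $a=\log{n}\log^{*}{n}$ then represents a deviation of $X$ above its mean by the unbounded factor $a/E[X]=\Theta(\log^{*}{n})$ (in particular $a>E[X]$ for $n$ large), which is exactly why the bound will be comfortable.

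Next I would substitute $a=\log{n}\log^{*}{n}$ and $\mu=E[X]$ into the Chernoff inequality used in Lemma \ref{lemDegeneracy}, namely $P\{X\ge a\}\le\big(e^{a/\mu-1}/(a/\mu)^{a/\mu}\big)^{\mu}$, and simplify exactly as there (dropping the harmless $e^{-1}$ factor and using $\mu=\Theta(\log{n})$) to obtain
\begin{ceqn}
\begin{align*}
P\{X\ge\log{n}\log^{*}{n}\}\ \le\ \Big(\frac{e\,E[X]}{\log{n}\log^{*}{n}}\Big)^{\log{n}\log^{*}{n}}\ =\ \Big(\frac{\Theta(1)}{\log^{*}{n}}\Big)^{\log{n}\log^{*}{n}}.
\end{align*}
\end{ceqn}
Taking logarithms, the exponent of the right-hand side is $-\log{n}\log^{*}{n}\big(\log\log^{*}{n}-\Theta(1)\big)$; since $\log^{*}{n}\to\infty$, for all sufficiently large $n$ this is at most $-\log{n}$, so $P\{X\ge\log{n}\log^{*}{n}\}\le n^{-1}$, indeed $n^{-\omega(1)}$. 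Hence $X\le\log{n}\log^{*}{n}$ with probability at least $1-n^{-\Omega(1)}$, i.e. w.h.p.

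I do not anticipate a genuine obstacle here; the only point needing a little attention is that $\log^{*}{n}$ grows extraordinarily slowly, so one should double-check that a factor-$\Theta(\log^{*}{n})$ deviation still beats $n^{-\Omega(1)}$ — it does, precisely because $\log^{*}{n}$ is unbounded. In fact one can sidestep any estimate on $\log^{*}{n}$: for $n$ large enough that $\log^{*}{n}\ge 3c_{2}$ one has $\log{n}\log^{*}{n}\ge 3E[X]$, whence $P\{X\ge\log{n}\log^{*}{n}\}\le P\{X\ge 3E[X]\}\le (e^{2}/27)^{E[X]}=(e^{2}/27)^{\Theta(\log{n})}=n^{-\Omega(1)}$ since $e^{2}/27<1$. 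Either route completes the proof; the substantive work of the section (identifying which vertices to single out and the weight/degree relations) was already done in Theorems \ref{thmWeightDegreeRela}--\ref{thmIsScaleFree}.
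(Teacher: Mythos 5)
Your argument is correct, and it is worth noting that it does not match the computation the paper actually performs. You read $X$ as the degree of $v$ in the whole graph $G$, take $E[X]=\Theta(w_{v})=\Theta(\log{n})$ from Theorem \ref{thmWeightDegreeRela}, and run the same $\big(e\,E[X]/a\big)^{a}$ Chernoff estimate as in Lemma \ref{lemDegeneracy}; the relative deviation is only $\Theta(\log^{*}{n})$, but because the mean is already $\Theta(\log{n})$ this still yields $n^{-\Omega(1)}$, and your fallback $P\{X\geq 3E[X]\}\leq (e^{2}/27)^{\Theta(\log{n})}$ makes that transparent without needing any growth estimate on $\log^{*}{n}$. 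The paper instead substitutes $E[X]=\log^{3-\beta}{n}-\log^{2-\beta}{n}=o(1)$ into the bound $\Pr\{X\geq(1+\delta)E[X]\}\leq e^{-\delta E[X]/3}$; that value of the mean comes from Theorem \ref{thm15} with $w_{l}=w_{m}=\Theta(1)\log{n}$, i.e.\ it is the expected degree of the weight-$\Theta(1)\log{n}$ vertex \emph{within the induced subgraph of vertices of weight at least $\Theta(1)\log{n}$}, not its degree in $G$. So the paper bounds a smaller random variable with an enormous relative deviation and obtains $n^{-\omega(1)}$, while you bound the larger variable with a modest relative deviation and obtain $n^{-\Omega(1)}$ — both suffice for ``w.h.p.'' Since $deg_{G}(v)\geq deg_{H}(v)$, your version implies the paper's; more importantly, your reading is the one the subsequent argument actually uses, because $G_{\RN{1}}$ is defined by the condition $deg(v)>\log{n}\log^{*}{n}$ with the degree taken in $G$. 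The only hypothesis you invoke that deserves explicit mention is the mutual independence of the edge indicators $\mathbbm{1}[(v,u)\in E]$ at a fixed $v$, which holds in the inhomogeneous random graph model and is likewise implicit in the paper's own Chernoff applications.
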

\begin{proof}
	By Chernoff bound \cite{upfal2005}, we have
	\begin{ceqn}
		\begin{align*}
		\Pr\big\{X\geq (1+\delta)E[X]\big\} \leq e^{-\frac{\delta E[X]}{3}}, \quad 1\leq \delta
		\end{align*}
	\end{ceqn}
	By setting $\delta = \frac{\log{n}\log^{*}{n}}{\log^{3-\beta }{n} - \log^{2-\beta }{n}} -1$, we get
	\begin{align*}
	& \Pr{\Big\{ X \geq 
		%\\&\qquad \quad
		\big(1+ \frac{\log{n}\log^{*}{n}}{\log^{3-\beta }{n} - \log^{2-\beta }{n}} -1 \big)(\log^{3-\beta }{n} - \log^{2-\beta }{n}) \Big\}} \\
	& \qquad \leq exp \Big(-\frac{ \big(\frac{\log{n}\log^{*}{n}}{\log^{3-\beta }{n} - \log^{2-\beta }{n}} -1 \big)(\log^{3-\beta }{n} - \log^{2-\beta }{n})}{3} \Big) \\
	& \qquad \leq exp \Big(-\frac{ \log{n}\log^{*}{n} - \log^{3-\beta }{n} + \log^{2-\beta }{n} }{3} \Big)
	\end{align*}
	Since $\beta > 3$, $\log^{3-\beta }{n}$ and $\log^{2-\beta }{n}$ are less than 1. Thus, we have
	\begin{ceqn}
		\begin{align*}
		\Pr{\Big\{ X \geq \log{n}\log^{*}{n} \Big\}} \leq exp \Big(-\frac{ \log{n}\log^{*}{n} }{3} \Big) \leq \frac{1}{n^{\Omega (1)}}
		\end{align*}
	\end{ceqn}
	which completes the proof.
\end{proof}
%~~~~~~~~~~~~~~~~~~~~~~~~~~~~~~~~
Now, we state the main theorem of this section.
\begin{theorem}
	An MIS on scale-free networks can be computed in $O\big(\frac{\log{n}}{\log{\log{n}}}\big)$ rounds a.s.
\end{theorem}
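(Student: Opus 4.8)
The plan is to split on the value of $\beta$, disposing of $\beta\le 3$ directly from the diameter bound and treating the genuinely hard regime $\beta>3$ with the two-phase scheme prepared by Theorems~\ref{nDividlog}--\ref{thmRelWeightDegLogLogStar}.

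For $\beta<3$, Theorem~\ref{scalefreeDiam} gives diameter $O(\log\log n)$ a.s., and Theorem~\ref{localModelMIS} then yields an MIS in $O(\log\log n)$ rounds a.s.; for $\beta=3$ the diameter is $O(\tfrac{\log n}{\log\log n})$ a.s. and the same reasoning applies. In both subcases the running time is $O(\tfrac{\log n}{\log\log n})$ a.s., so only $\beta>3$ remains. For that regime I would run the two-phase algorithm. In Phase~I, form $G_{\RN{1}}$, the subgraph induced by the vertices of degree larger than $\log n\log^{*}n$. By Theorem~\ref{thmRelWeightDegLogLogStar} a vertex of weight $\Theta(1)\log n$ has degree at most $\log n\log^{*}n$ w.h.p., and since degree is monotone in weight (Theorems~\ref{thmWeightDegreeRela} and \ref{thm15}) the vertex set of $G_{\RN{1}}$ coincides a.s.\ with the set of vertices of weight larger than $\Theta(1)\log n$; Theorem~\ref{thmIsScaleFree} then says that $G_{\RN{1}}$ is a scale-free network with $\beta_{H}=2$ a.s. Applying Theorem~\ref{scalefreeDiam} at this exponent gives $\mathrm{diam}(G_{\RN{1}})=O(\log\log n)$ a.s., whence by Theorem~\ref{localModelMIS} an MIS $M_{1}$ of $G_{\RN{1}}$ is computed in $O(\log\log n)$ rounds a.s. In Phase~II, form $G_{\RN{2}}$ on the vertices that after Phase~I are neither in $M_{1}$ nor in $N(M_{1})$; each such vertex has degree at most $\log n\log^{*}n=\mathrm{poly}(\log n)$ in $G$, hence in $G_{\RN{2}}$, so $\Delta(G_{\RN{2}})=\mathrm{poly}(\log n)$ and Theorem~\ref{barenElkinpolyLog} produces an MIS $M_{2}$ of $G_{\RN{2}}$ in $\exp\!\big(\sqrt{\log\log n}\big)$ rounds w.h.p.

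It then remains to check that $M:=M_{1}\cup M_{2}$ is an MIS of $G$ and to add up the costs. For independence, $M_{1}$ and $M_{2}$ are each independent, and no edge joins them because $V(G_{\RN{2}})$ excludes $N(M_{1})$. For maximality, any $v\in V\setminus M$ is either a vertex of $G_{\RN{1}}$ outside $M_{1}$ (adjacent to $M_{1}$ by maximality of $M_{1}$ in $G_{\RN{1}}$), a vertex outside $G_{\RN{1}}$ lying in $N(M_{1})$, or a vertex of $G_{\RN{2}}$ outside $M_{2}$ (adjacent to $M_{2}$ by maximality of $M_{2}$ in $G_{\RN{2}}$), so in every case $v$ has a neighbour in $M$. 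For the running time, $O(\log\log n)+\exp\!\big(\sqrt{\log\log n}\big)=O\!\big(\tfrac{\log n}{\log\log n}\big)$, since $\sqrt{\log\log n}=o(\log\log n)$ forces $\exp\!\big(\sqrt{\log\log n}\big)=o\!\big(\tfrac{\log n}{\log\log n}\big)$. Combining with the $\beta\le 3$ case, and noting that the w.h.p.\ guarantee of Phase~II is in particular an a.s.\ guarantee, we obtain the claim a.s.

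The step I expect to be the real obstacle is Phase~I: making rigorous that $G_{\RN{1}}$ — defined by a \emph{degree} cutoff — is, up to the a.s.-negligible discrepancy coming from the concentration of $deg(i)$ around $\Theta(w_{i})$, the same graph as the \emph{weight}-cutoff subgraph $H$ of Theorem~\ref{thmIsScaleFree}, together with checking that the diameter estimate of Theorem~\ref{scalefreeDiam} is legitimately invoked at the boundary exponent $\beta_{H}=2$. Everything else — Phase~II via Theorem~\ref{barenElkinpolyLog}, the correctness of the merged set $M$, and the arithmetic on the round counts — is routine.
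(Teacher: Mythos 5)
Your proposal is correct and follows essentially the same route as the paper: the three-way case split on $\beta$, the diameter argument via Theorems~\ref{scalefreeDiam} and \ref{localModelMIS} for $\beta\le 3$, and the two-phase scheme for $\beta>3$ using Theorems~\ref{thmRelWeightDegLogLogStar}, \ref{thmIsScaleFree}, and \ref{barenElkinpolyLog}, concluding by taking the maximum of the round counts. Your added verification that $M_{1}\cup M_{2}$ is an MIS of $G$, and your explicit flagging of the degree-cutoff versus weight-cutoff identification in Phase~I, are both points the paper's own proof leaves implicit, but they do not change the approach.
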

\begin{proof}
	To provide the claim, we have divided the proof into three below cases:
	\begin{itemize}
		\item $\beta <3$) 
		By Theorem \ref{scalefreeDiam}, the diameter of scale-free networks with power-law exponent $\beta < 3$ is $O(\log{\log{n}})$, a.s. Thus, by Theorem \ref{localModelMIS}, the time complexity of computing an MIS on scale-free networks with power-law exponent $\beta < 3$ is $O(\log{\log{n}})$ a.s.
		\item $\beta =3$) 
		By Theorem \ref{scalefreeDiam}, the diameter of scale-free networks with power-law exponent $\beta = 3$ is $O\Big(\frac{\log{n}}{\log{\log{n}}}\Big)$ a.s. Thus, by Theorem \ref{localModelMIS}, the time complexity of computing an MIS on scale-free networks with power-law exponent $\beta = 3$ is $O\Big(\frac{\log{n}}{\log{\log{n}}}\Big)$ a.s.
		\item $\beta >3$) 
		By Theorem \ref{thmRelWeightDegLogLogStar}, the degree of the vertex with weight  $\Theta(1)\log{n}$ on $G_{sf}(\beta > 3)$ is at most $\log{n}\log^{*}{n}$ w.h.p.  We construct the induced subgraph $G_{\RN{1}}$ that is constructed by vertices with degrees larger than $\log{n}\log^{*}{n}$. By Theorem \ref{thmIsPowerLaw} the distribution of weights of vertices in $G_{\RN{1}}$ follows the power-law distribution. By Theorem \ref{thmIsScaleFree}, $G_{\RN{1}}$ is a scale-free network with power-law exponent $\beta' =2$. Thus, by case 1 of this theorem, an MIS on $G_{\RN{1}}$ can be computed in $O(\log{\log{n}})$ rounds a.s.
		
		In the next step, we construct another induced subgraph $G_{\RN{2}}$ such that $V(G_{\RN{2}}) = \{v | v \in V-V(G_{\RN{1}}), v \notin N\big(MIS(G_{\RN{1}})\big) \}$.
		It should be noted the maximum degree in $G_{\RN{2}}$ is $poly(\log{n})$. By Theorem \ref{barenElkinpolyLog}, an MIS can be computed in $exp\big(\sqrt{\log{\log{n}}}\big)$ w.h.p. on an arbitrary distributed network, when $\Delta = poly(\log{n})$. Thus, the running time for computing an MIS on graph $G_{\RN{2}}$ becomes $exp\big(\sqrt{\log{\log{n}}}\big)$ w.h.p.
	\end{itemize}
	According to the three studied cases, the complexity of computing an MIS on scale-free networks is 
	\begin{ceqn}
		\begin{align*}
		&max\Big\{O(\log{\log{n}}), O\Big(\frac{\log{n}}{\log{\log{n}}}\Big), 
		exp\big(\sqrt{\log{\log{n}}}\big)\Big\} 
		 = O\Big(\frac{\log{n}}{\log{\log{n}}}\Big)
		\end{align*}
	\end{ceqn}
\end{proof}

%~~~~~~~~~~~~~~~~~~~~~~~~~~~~~~~~
Eventually, we present our approach in pseudo-code in Algorithm \ref{algLogDLogLogn}.
\begin{algorithm}
	\renewcommand{\baselinestretch}{1}
	\caption{MIS Algorithm on $G_{sf}(\beta)$ in $O(\frac{\log{n}}{\log{\log{n}}})$ Rounds}\label{algLogDLogLogn}
	\begin{algorithmic}
		\State{\textbf{Input:} Graph $G_{sf}(\beta)=(V,E)$.}
		\State{\textbf{Output:} The output is an MIS.}
		\If{$\beta < 3$}
		\State{Run the trivial $O(diameter)$ MIS algorithm on $G_{sf}(\beta)$}
		\Else
		\Statex{Phase I}
		\State{$V_{\RN{1}} = \{v | v\in V , deg(v) \geq \log{n}\log^{*}{n} \}$}
		\State{$G_{\RN{1}} = G_{[V_{\RN{1}}]}$}
		\State{Run the trivial $O(diameter)$ MIS algorithm on $G_{\RN{1}}$}
		\Statex{Phase II}
		\State{$V_{\RN{2}} = \{v | v \in V - V_{\RN{1}} , v \notin N\big(MIS(G_{\RN{1}})\big) \}$}
		\State{$G_{\RN{2}} = G_{[V_{\RN{2}}]}$}
		\State{Run Barenboim-Elkin MIS algorithm \cite{Barenboim2012} on $G_{\RN{2}}$}
		\EndIf
	\end{algorithmic}
\end{algorithm}
%_______________________________________________________________________________________________________________________________________
%_______________________________________________________________________________________________________________________________________
%_____________________________________________________Experiments________________________________________________________________________
%_______________________________________________________________________________________________________________________________________
%_______________________________________________________________________________________________________________________________________
\section{Experiments}
In this section, we evaluate two significant parts of our approach.
\begin{enumerate}
	\item
	Computing the degeneracy value for $G_{sf}(\beta \geq 3)$ graphs.
	\item
	Checking the induced subgraph constructed by vertices with degree larger than $\log{n}\log^{*}{n}$ from $G_{sf}(\beta \geq 3)$ is a scale-free network.
\end{enumerate}
For each $\beta \in \{3.0,3.5,4.0,4.5\}$, we have generated 10000 scale-free networks with $n_1=10000$ nodes, and 1000 scale-free networks with $n_2=100000$ nodes by using the proposed algorithm of Miller and Hagberg \cite{Miller2011}. Then, we have computed the degeneracy value of these networks. We show the results of this task in Fig. \ref{fig:subfigures1}. As shown in that Figure, it's clear that the degeneracy value of $G_{sf}(\beta \geq 3)$ graphs is less than $O(2^{log^{1/3} n})$ and the claim of Theorem \ref{thmDegeneracy} is confirmed. 
\begin{figure*}[!htb]
	\centering
	\subfloat[$\beta = 3.0$]{
		\label{fig:3.0}
		\includegraphics[scale=0.2]{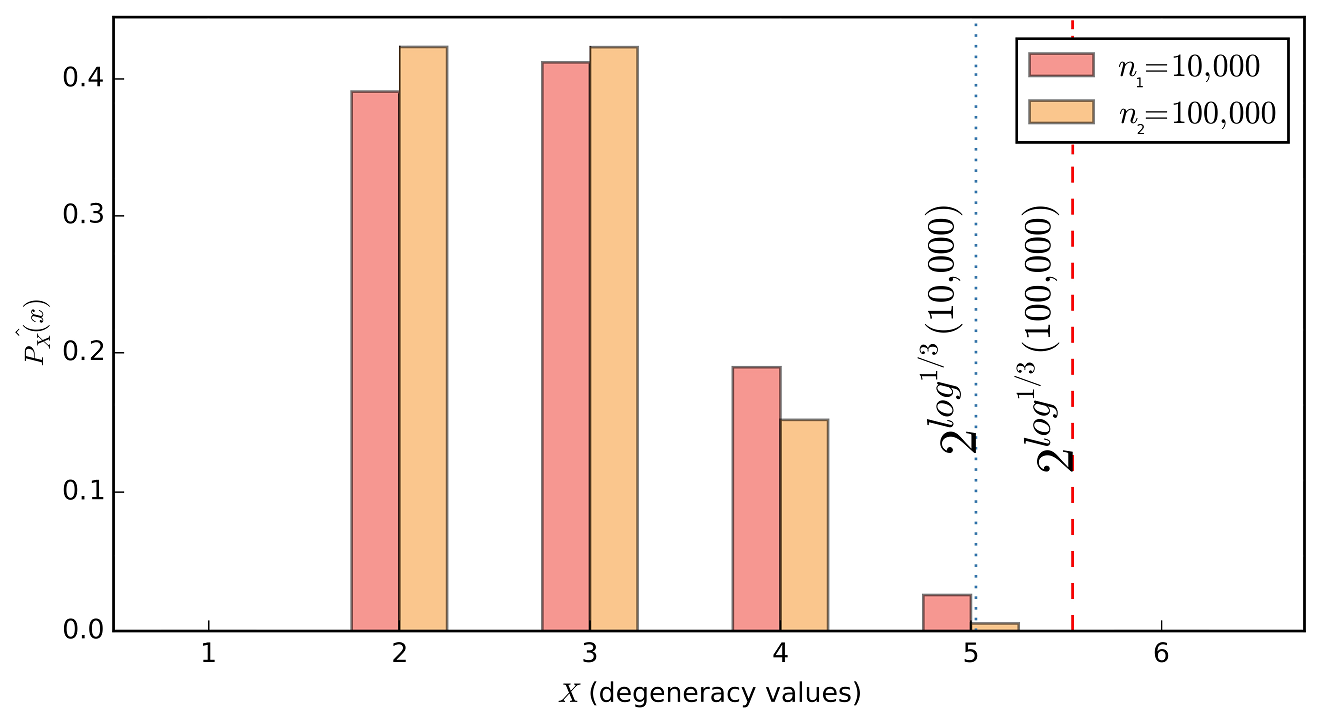}
	}
	\subfloat[$\beta = 3.5$]{
		\label{fig:3.2}
		\includegraphics[scale=0.2]{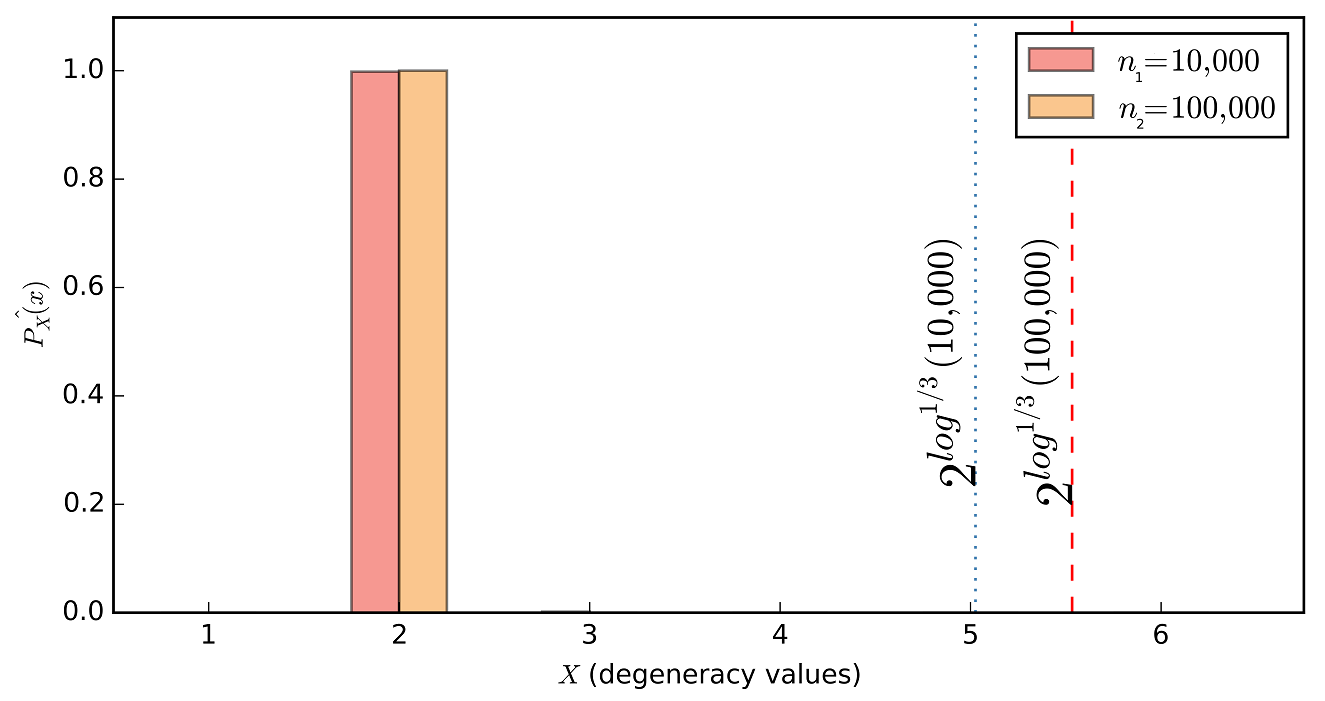}
	}\\
	\subfloat[$\beta = 4.0$]{
		\label{fig:3.5}
		\includegraphics[scale=0.2]{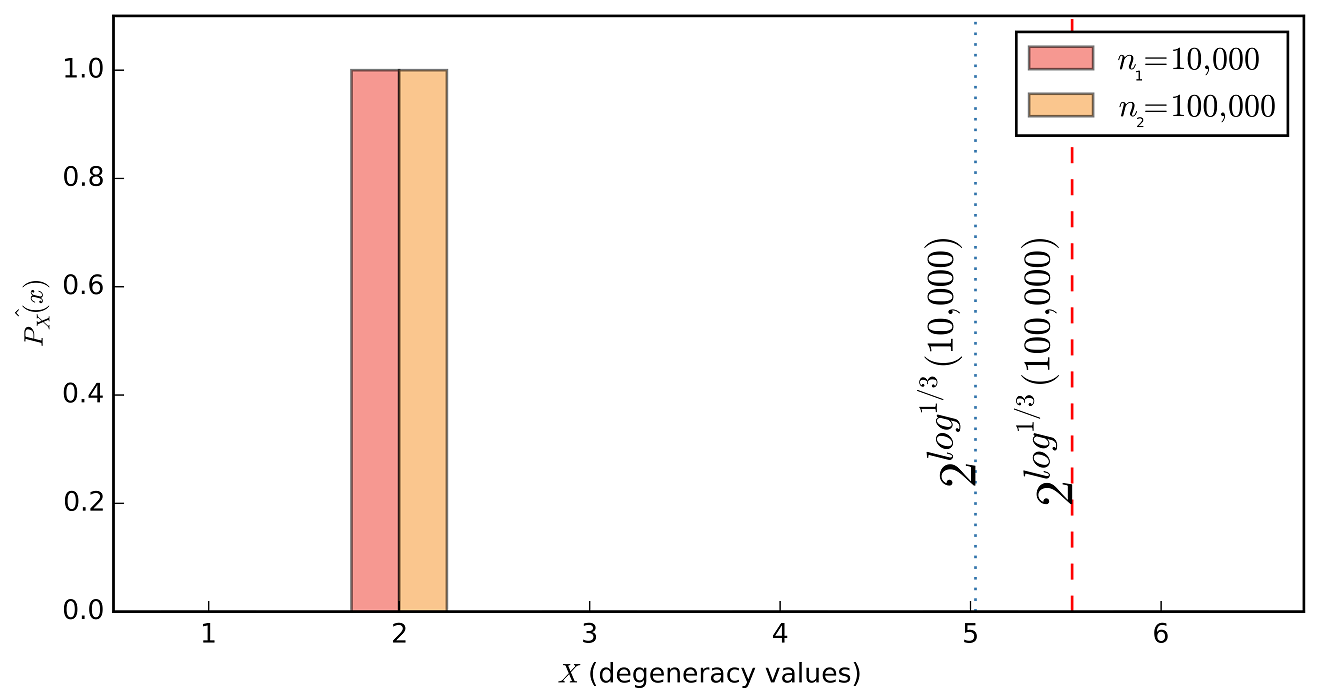}
	}
	\subfloat[$\beta = 4.5$]{
		\label{fig:4.0}
		\includegraphics[scale=0.2]{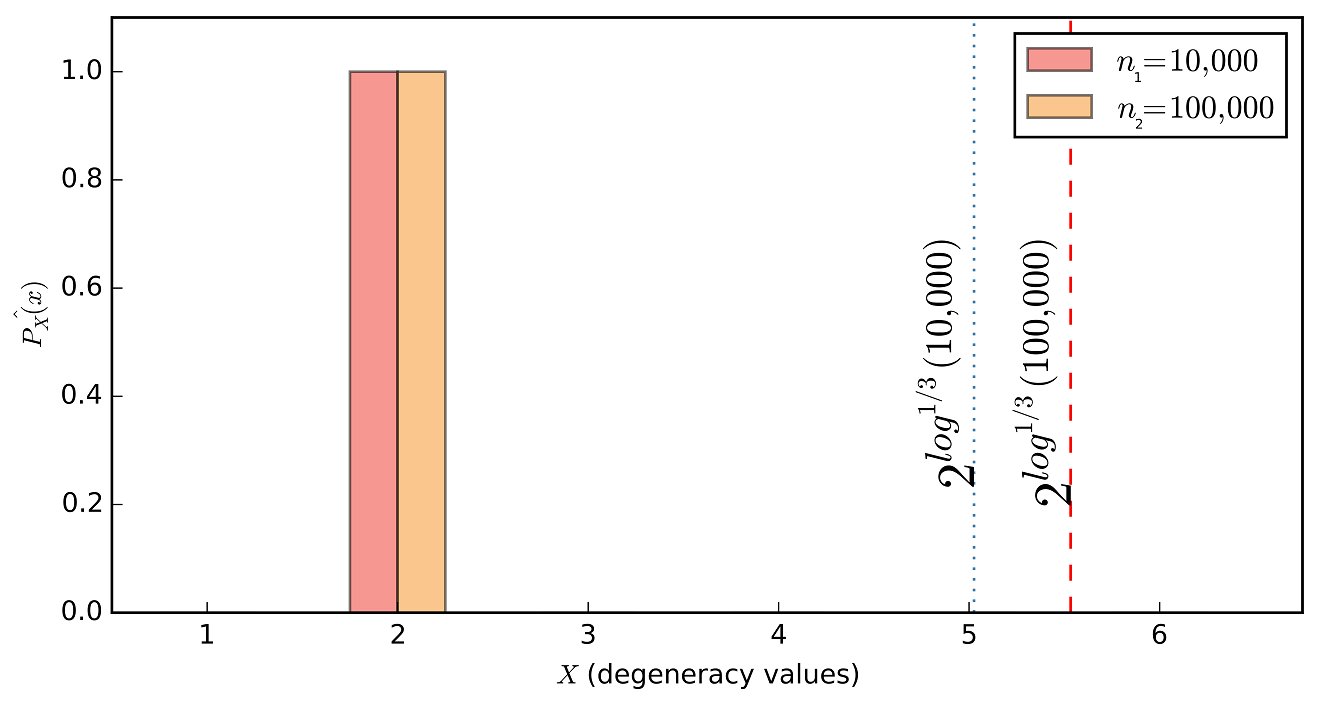}
	}
	
	\caption{
		The result of computing degeneracy on scale-free networks. Each of Figures \ref{fig:3.0}, \ref{fig:3.2}, \ref{fig:3.5}, and \ref{fig:4.0} is a relative frequency diagram of computing degeneracy on two groups of scale-free networks which the first and second groups are shown by red and orange colors respectively. The dash lines show the boundaries of $2^{log^{1/3}(10000)}$ and $2^{log^{1/3}(100000)}$ as well.
	}
	\label{fig:subfigures1}
\end{figure*}
%~~~~~~~~~~~~~~~~~~~~~~~~~~~~

For each $\beta \in \{3.0,3.3,3.6,3.9,4.2,4.5,4.8,5.1\}$, we have generated 1000 scale-free networks with 1000000 nodes by using the mentioned algorithm. Then, we remove the vertices with degree less than $\log{n}\log^{*}{n}$ from each network. To check if the degrees of vertices in the remaining network follows the power-law distribution, we use a statistical hypothesis testing, as follows, \cite{Clauset2009},
\begin{align*}
\begin{cases}
H_{0}: \text{data is generated from a power-law distribution}\\
H_{1}: \text{data is not generated from a power-law distribution}
\end{cases}
\end{align*}
As it is shown in Fig. \ref{figPvalues}, the obtained $p$-values are between $0.10$ and $1$, for all values of the parameters. such results are in favor of $H_{0}$, i.e. the networks follow the power-law distribution. It should be noted that, considering significant level $\leq 0.10$, the $p$-values must be greater than $0.10$ to acceptance the $H_{0}$ hypothesis.
For computing these $p$-values, we use the poweRlaw package (version $0.60.3$) that has written in $R$ software and has proposed with Gillespie \cite{Gillespie2015RPackage}.
%~~~~~~~~~~~~~~~~~~~~~~~~~~~~~~~~~~
\begin{figure}[!htb]
	\centering
	\includegraphics[scale=0.067]{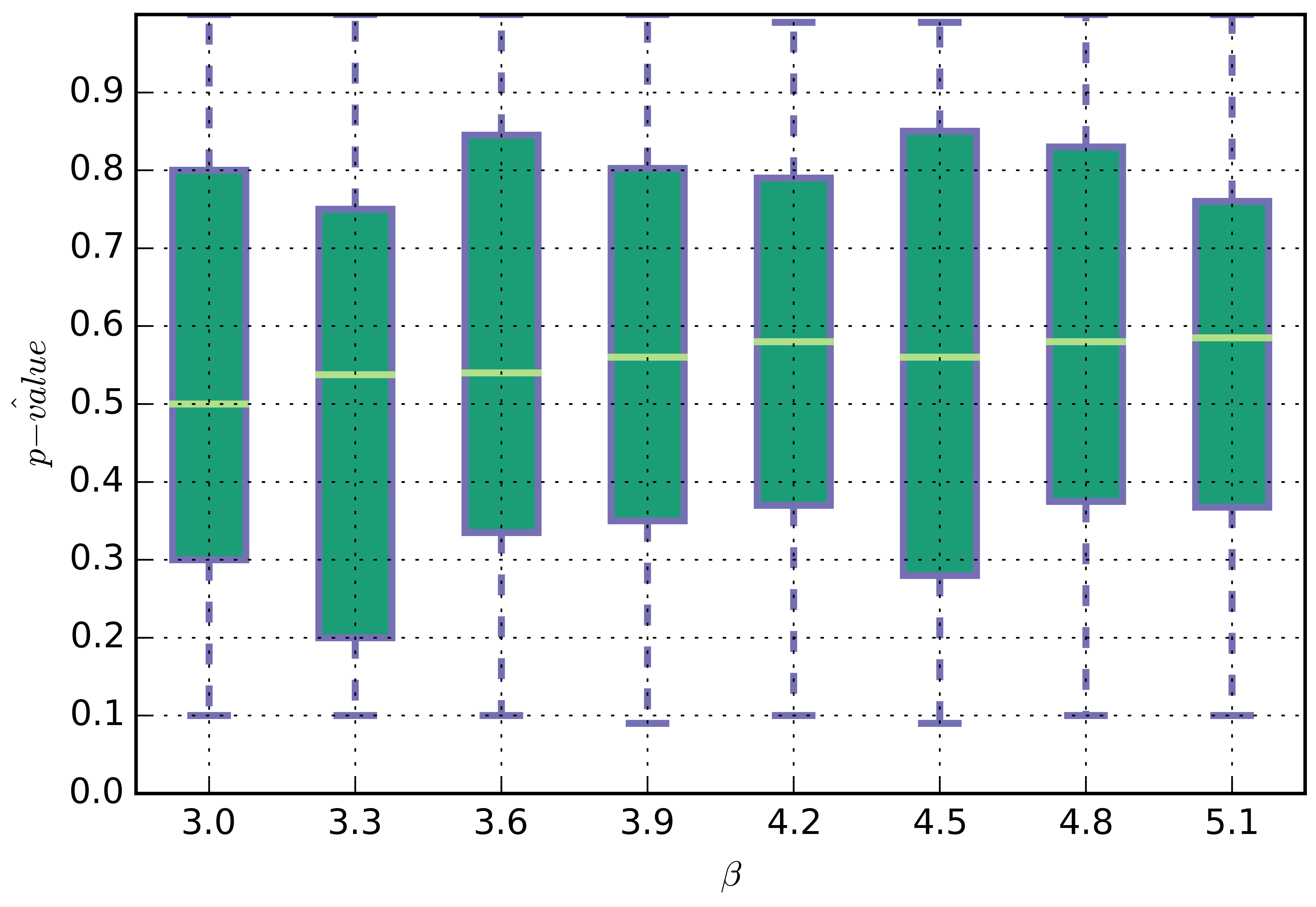}	
	\caption{The result of testing hypothesis about power-law. The distribution of degree for scale-free networks after removing vertices with degree less than $\log{n}\log^{*}{n}$ are computed. The $p$-values are shown by the box-plot.}\label{figPvalues}		
\end{figure}
%_______________________________________________________________________________________________________________________________________
%_______________________________________________________________________________________________________________________________________
%_____________________________________________________Conclusion_________________________________________________________________________
%_______________________________________________________________________________________________________________________________________
%_______________________________________________________________________________________________________________________________________
\section{Conclusion and future works}
Two new algorithms with the time complexity of $O(\frac{\log{n}}{\log{\log{n}}})$ and $O(\log^{2/3}{n})$ rounds were presented for computing distributed MIS on scale-free networks. To this end, for modeling the scale-free networks, inhomogeneous random graphs with power-law weights were used. In addition, it was proved that the arboricity and degeneracy on these networks with power-law exponent $\beta \geq 3$ are less than $2^{log^{1/3} n}$ w.h.p. Hence, as the future work, it is a good idea to compute the arboricity and degeneracy of scale-free networks with power-law exponent $2\leq \beta < 3$. Moreover, we can work to propose an approach in order to solve $(\Delta+1)$-vertex coloring, finding maximal clique, and minimal dominating set on scale-free networks.

\bibliographystyle{tfnlm}
\bibliography{ref}

\begin{thebibliography}{10}
\providecommand{\url}[1]{\normalfont{#1}}
\providecommand{\urlprefix}{Available from: }

\bibitem{Erdos1959}
Erd\H{o}s~P, R\'enyi~A. On random graphs. I Publ Math Debr.
  1959;\hspace{0pt}6:290--297.

\bibitem{Barabasi2016}
Barabasi~AL, Posfai~M. Network science. Cambridge University Press; 2016.

\bibitem{Newman2003}
Newman~MEJ. The structure and function of complex networks. SIAM Rev.
  2003;\hspace{0pt}45:167--256.

\bibitem{Barabasi2002}
Albert~R, Barabasi~AL. Statistical mechanics of complex networks. Rev Mod Phys.
  2002;\hspace{0pt}74:47--97.

\bibitem{Barabasi1999}
Barabási~AL, Albert~R. Emergence of scaling in random networks. Science.
  1999;\hspace{0pt}286:509--512.

\bibitem{Hofstad2016}
Hofstad~RVD. Random graphs and complex networks. Cambridge University Press;
  2016.

\bibitem{Aiello2000}
Aiello~W, Chung~F, Lu~L. A random graph model for massive graphs. In:
  Proceedings of the Thirty Second Annual ACM Symposium on Theory of Computing,
  New York,; 2000. p. 171--180.

\bibitem{ChungLu2002}
Chung~F, Lu~L. Connected components in random graphs with given expected degree
  sequences. Annals of Combinatorics. 2002;\hspace{0pt}6:125--145.

\bibitem{Norros2006}
Norros~I, Reittu~H. On a conditionally poissonian graph process. Adv Appl
  Probab. 2006;\hspace{0pt}38:59--75.

\bibitem{Britton2006}
Britton~T, Deijfen~M, Martin-Löf~A. Generating simple random graphs with
  prescribed degree distribution. J Stat Phys. 2006;\hspace{0pt}124:1377--1397.

\bibitem{Friedrich2015}
Friedrich~T, Krohmer~A. Parameterized clique on inhomogeneous random graphs.
  Discrete Applied Mathematics. 2015;\hspace{0pt}184:130--138.

\bibitem{Bollob2007}
Bollob\'as~B, Janson~S, Riordan~O. The phase transition in inhomogeneous random
  graphs. Random Struct Alg. 2007;\hspace{0pt}31:3--122.

\bibitem{Eggemann2011}
Eggemann~N, Noble~SD. The clustering coefficient of a scale-free random graph.
  Discrete Applied Mathematics. 2011;\hspace{0pt}159:953--965.

\bibitem{Reuven2003}
Reuven~C, Shlomo~H. Scale-free networks are ultrasmall. Phys Rev Lett.
  2003;\hspace{0pt}90:058701.

\bibitem{Dole2017}
Dole\v{z}al~M, Hladk\'y~J, M\'ath\'e~A. Cliques in dense inhomogeneous random
  graphs. Random Struct Alg. 2017;\hspace{0pt}38:150--190.

\bibitem{Ningyuan2014}
Ningyuan~C, Nelly~L, Mariana~OC. Pagerank in scale-free random graphs. In
  Proceedings of Algorithms and Models for the Web-Graph (WAW), Beijing.
  2014;\hspace{0pt}:120--131.

\bibitem{Fountoulakis2012}
Fountoulakis~N, Panagiotou~K, Sauerwald~T. Ultra-fast rumor spreading in social
  networks. In: Proceedings of the Twenty-Third Annual ACM-SIAM Symposium on
  Discrete Algorithms, Kyoto; 2012. p. 1642--1660.

\bibitem{Brach2016}
Brach~P, Cygan~M, Lacki~J, et~al. Algorithmic complexity of power law networks.
  In: Proceedings of the Twenty-Seventh Annual ACM-SIAM Symposium on Discrete
  Algorithms, Arlington; 2016. p. 1306--1325.

\bibitem{Sohn2017}
Sohn~I. Small-world and scale-free network models for iot systems. Mobile
  Information Systems. 2017;\hspace{0pt}2017:1--9.

\bibitem{Zheng2013}
Zheng~G, Liu~Q. Scale-free topology evolution for wireless sensor networks.
  Computers \& Electrical Engineering. 2013;\hspace{0pt}39:1779--1788.

\bibitem{CYang2015}
Yang~C, Liu~C, Zhang~X, et~al. A time efficient approach for detecting errors
  in big sensor data on cloud. IEEE Transactions on Parallel and Distributed
  Systems. 2015;\hspace{0pt}26:329--339.

\bibitem{LLi2014}
Li~L, Liu~Y, Zhang~N. A complex network approach to topology control problem in
  underwater acoustic sensor networks. IEEE Transactions on Parallel and
  Distributed Systems. 2014;\hspace{0pt}25:3046--3055.

\bibitem{Clauset2009}
Clauset~A, Shalizi~CR, Newman~MEJ. Power-law distributions in empirical data.
  SIAM Rev. 2009;\hspace{0pt}51:661--703.

\bibitem{Barenboim2016}
Barenboim~L, Elkin~M, Pettie~S, et~al. The locality of distributed symmetry
  breaking. Journal of the ACM(JACM). 2016;\hspace{0pt}63.

\bibitem{Censor2014}
Censor-Hillel~K, Gilbert~S, Kuhn~F, et~al. Structuring unreliable radio
  networks. Distributed Computing. 2014;\hspace{0pt}27:1--19.

\bibitem{Liu2016}
Liu~W, Deng~T, Yang, et~al. Towards robust surface skeleton extraction and its
  applications in 3d wireless sensor networks. IEEE/ACM Transactions on
  Networking. 2016;\hspace{0pt}24:3300--3313.

\bibitem{Yu2017}
Yu~D, Ning~L, Zou~Y, et~al. Distributed spanner construction with physical
  interference: constant stretch and linear sparseness. IEEE/ACM Transactions
  on Networking. 2017;\hspace{0pt}:1--14.

\bibitem{Alon1986}
Alon~N, Babai~L, Itai~A. A fast and simple randomized parallel algorithm for
  the maximal independent set problem. Journal of Algorithms.
  1986;\hspace{0pt}7:567--583.

\bibitem{Luby1986}
Luby~M. A simple parallel algorithm for the maximal independent set problem.
  SIAM Journal on Computing. 1986;\hspace{0pt}15:1036--1053.

\bibitem{Ghaffari2016}
Ghaffari~M. An improved distributed algorithm for maximal independent set. In:
  Proceedings of the Twenty-Seventh Annual ACM-SIAM Symposium on Discrete
  Algorithms, Arlington; 2016. p. 202--219.

\bibitem{Krzywdzinski2015}
Krzywdzinski~K, Rybarczyk~K. Distributed algorithms for random graphs.
  Theoretical Computer Science. 2015;\hspace{0pt}605:95--105.

\bibitem{Lenzen2011}
Lenzen~C, Wattenhofer~R. Mis on trees. In: Proceedings of the 30th Annual ACM
  SIGACT-SIGOPS Symposium on Principles of Distributed Computing, San Jose;
  2011. p. 41--48.

\bibitem{Fontaine2013}
Fontaine~A, Métivier~Y, Robson~J, et~al. Optimal bit complexity randomised
  distributed mis and maximal matching algorithms for anonymous rings.
  Information and Computation. 2013;\hspace{0pt}233:32--40.

\bibitem{Schneider2010}
Schneider~J, Wattenhofer~R. An optimal maximal independent set algorithm for
  bounded-independence graphs. Distributed Computing.
  2010;\hspace{0pt}22:349--361.

\bibitem{Goldberg1988}
Goldberg~AV, Plotkin~SA, Shannon~GE. Parallel symmetry-breaking in sparse
  graphs. SIAM Journal on Discrete Mathematics. 1988;\hspace{0pt}1:434--446.

\bibitem{Panconesi2001}
Panconesi~A, Rizzi~R. Some simple distributed algorithms for sparse networks.
  Distributed Computing. 2001;\hspace{0pt}14:97--100.

\bibitem{Barenboim2010}
Barenboim~L, Elkin~M. Sublogarithmic distributed mis algorithm for sparse
  graphs using nash-williams decomposition. Distributed Computing.
  2010;\hspace{0pt}22:363--379.

\bibitem{Metivier2011}
M\'etivier~Y, Robson~JM, Saheb-Djahromi~N, et~al. An optimal bit complexity
  randomized distributed mis algorithm. Distributed Computing.
  2011;\hspace{0pt}23:331--340.

\bibitem{Jeavons2016FeedbackFromNature}
Jeavons~P, Scott~A, Xu~L. Feedback from nature: simple randomised distributed
  algorithms for maximal independent. Distributed Computing.
  2016;\hspace{0pt}29:377--393.

\bibitem{Peleg2000}
Peleg~D. Distributed computing: A locality sensitive approach. SIAM; 2000.

\bibitem{Sarma2015}
Sarma~AD, Molla~AR, Pandurangan~G, et~al. Fast distributed pagerank
  computation. Theoretical Computer Science. 2015;\hspace{0pt}561:113--121.

\bibitem{Nash1964}
Nash-Williams~C. Decomposition of finite grahs into forests. London
  Mathematical Society. 1964;\hspace{0pt}39:12--13.

\bibitem{Matula1983}
Matula~DW, Beck~LL. Smallest-last ordering and clustering and graph coloring
  algorithms. Journal of the ACM (JACM). 1983;\hspace{0pt}30:417--427.

\bibitem{Barenboim2012}
Barenboim~L, Elkin~M, Pettie~S, et~al. The locality of distributed symmetry
  breaking. In: IEEE 53rd Annual Symposium on Foundations of Computer Science,
  New Jersey; 2012. p. 321--330.

\bibitem{Dubhashi2009}
Dubhashi~DP, Panconesi~A. Concentration of measure for the analysis of
  randomized algorithms. Cambridge University Press; 2009.

\bibitem{DasGupta2008}
DasGupta~A. Asymptotic theory of statistics and probability. Springer; 2008.

\bibitem{upfal2005}
Mitzenmacher~M, Upfal~E. Probability and computing: Randomized algorithms and
  probabilistic analysis. Cambridge University Press; 2005.

\bibitem{Miller2011}
Miller~JC, Hagberg~A. Efficient generation of networks with given expected
  degrees. In Proceedings of Algorithms and Models for the Web-Graph (WAW),
  Atlanta. 2011;\hspace{0pt}:115--126.

\bibitem{Gillespie2015RPackage}
Gillespie~CS. Fitting heavy tailed distributions: the powerlaw package. Journal
  of Statistical Software. 2015;\hspace{0pt}64.

\end{thebibliography}
\end{document}